\documentclass{article}



\usepackage[left=1in,right=1in,top=1in,bottom=1in]{geometry}



\usepackage{dsfont}             %
\usepackage{amsmath}
\usepackage{amsthm}
\usepackage{amssymb}
\usepackage{mathtools}
\usepackage{algorithmicx}
\usepackage{algorithm}
\usepackage{graphicx}
\usepackage{color}
\usepackage{subcaption}
\usepackage{xcolor}
\usepackage{forest}
\usepackage{booktabs}

\usepackage{hyperref}

\usepackage{mathabx}

\theoremstyle{plain}
\newtheorem{theorem}{Theorem}[section]
\newtheorem{lemma}[theorem]{Lemma}
\newtheorem{definition}[theorem]{Definition}
\newtheorem{observation}[theorem]{Observation}

\newtheorem{proposition}[theorem]{Proposition}

\theoremstyle{remark}

\newtheorem*{mainq}{Main Question}

\newcommand{\eps}{\varepsilon}

\newtheorem*{ksumconj}{The Strong $k$SUM Conjecture}

\newcommand{\diag}{\mathrm{diag}}


\begin{document}

%
%


\title{Impossibility Results for Grammar-Compressed Linear Algebra}

\date{}

\author{%
  Amir Abboud\thanks{IBM Almaden Research Center, \texttt{amir.abboud@gmail.com}}
  \and
  Arturs Backurs\thanks{Toyota Technological Institute at Chicago, \texttt{backurs@ttic.edu}. Supported by an NSF Grant CCF-2006806.}
  \and
  Karl Bringmann\thanks{Saarland University and Max Planck Institute for Informatics, Saarland Informatics Campus, \texttt{bringmann} \texttt{@cs.uni-saarland.de}. This work is part of the project TIPEA that has received funding from the European Research Council (ERC) under the European Unions Horizon 2020 research and innovation programme (grant agreement No.\ 850979).}
  \and
  Marvin K\"unnemann\thanks{Max Planck Institute for Informatics, Saarland Informatics Campus, \texttt{marvin@mpi-inf.mpg.de}}
}

\maketitle
\begin{abstract}
  
To handle vast amounts of data, it is natural and popular to compress vectors and matrices. 
When we compress a vector from size $N$ down to size $n \ll N$, it certainly makes it easier to store and transmit efficiently, but does it also make it easier to process?

In this paper we consider lossless compression schemes, and ask if we can run our computations on the compressed data as efficiently as if the original data was that small. That is, if an operation has time complexity $T(\text{input-size})$, can we perform it on the compressed representation in time $T(n)$ rather than $T(N)$?
We consider the most basic linear algebra operations: inner product, matrix-vector multiplication, and matrix multiplication.
In particular, given two compressed vectors, can we compute their inner product in time $O(n)$? Or perhaps we must decompress first and then multiply, spending $\Omega(N)$ time?

The answer depends on the compression scheme. While for simple ones such as Run-Length-Encoding (RLE) the inner product can be done in $O(n)$ time, we prove that this is impossible for compressions from a richer class: essentially $n^2$ or even larger runtimes are needed in the worst case (under complexity assumptions).
This is the class of \emph{grammar-compressions} containing most popular methods such as the Lempel-Ziv family.
These schemes are more compressing than the simple RLE, but alas, we prove that performing computations on them is much harder.

\end{abstract}

\section{Introduction}
\label{sec:intro}


The idea of using compression to speed up computations can be found in any domain that deals with large-scale data, and ML is no exception.
By exploiting redundancies and various forms of structure in a piece of data, compression algorithms such as zip can reduce its size from $N$ down to $n$, where $n  \ll N$.
The data becomes cheaper to store, access, transmit, and perhaps also to analyze. Can we run our ML tools on the compressed data, without decompressing it first, and make the computation times proportional to $n$ rather than $N$?
Since most ML algorithms boil down to large amounts of basic algebraic operations such as multiplications of vectors and matrices, with \emph{inner product} as the atomic operation, the most basic question in this context is:

\begin{mainq}
Given two $N$-dimensional vectors, each in a compressed form of size $n \ll N$, can we compute their inner product in $\tilde{O}(n)$ time\footnote{We use the notation $\tilde{O}(n)=n\cdot N^{o(1)}$ for near-linear time, hiding small terms such as log factors.} rather than $O(N)$?
\end{mainq}

The answer, of course, depends on the compression scheme that we use. 
There seems to be an inherent tension: more complex schemes have higher compression rates but are harder to analyze without decompression. 

First, let us clarify that our interest is in exact computations and \emph{lossless} compressions, 
even though lossy techniques such as dimensionality reduction  \cite{lossy01} are widely used by the ML community.
In many cases, e.g. when performing a basic algebraic operation within a larger pipeline, even a small amount of error could add up to make the final result unintelligible.
Recent years has seen a growing interest in exploring the potential of lossless compression for speeding up ML \cite{CLA_IBM_CACM19,tabei2016scalable,LZW_ML17,MJF19}.
An inspiring result was honorably mentioned as an outstanding paper at NeurIPS last year \cite{MJF19}: any $N \times d$ matrix $A$ can be compressed down to a matrix of size $d \times d$ such that the optimal solutions of Least-Mean-Squares (LMS) instances are exactly the same on $A$ and $A'$.
This is an example where for a specific task (LMS solvers) a specific compression scheme (designed by the authors) leads to a solution in time $T(n)$ rather than $T(N)$,  giving a 100x speedup on benchmark data; it makes one wonder if this approach can work in a more general setting.

For rather simple compression methods, the answer to our question is positive. 
A recent Communications of the ACM article \cite{CLA_IBM_CACM19} exhibits \emph{Compressed Linear Algebra} \cite{CLA_IBM16,CLA_IBM17,CLA_IBM18} a compression scheme for vectors and matrices that uses simple techniques such as Run Length Encoding (RLE) and allows for fast computations on the compressed data with impressive experimental results when integrated into ML systems.
The RLE encoding of a vector simply replaces runs of values by tuples indicating the value and the length of the run; e.g. the binary vector $00011111000$ gets encoded as $0^31^50^3$. Given two vectors encoded in this way with size $n_{RLE}$, a simple one-pass algorithm can compute their inner product in $O(n_{RLE})$ time. 
Before that, there were many algorithms for exploiting succinct encodings of \emph{sparse} vectors \cite{Sparse_book03,sparse_Mv_compression08,sparse_Mv_compression12}; e.g. by simply listing the nonzero locations the binary vector $0100001000$ gets encoded as $(2,7)$.
These encodings allow for a linear time inner product computation as well.

However, these simple methods are often not very compressing. At the other end of the spectrum, we have the heavy-duty and time-tested family of \emph{Grammar-Compressions}~\cite{KiefferY00} that includes the Lempel-Ziv-family (LZ77, LZ78, LZW, etc.) \cite{LZ76,LZ77,W84}, Byte-Pair Encoding \cite{BytePair}, dictionary methods, and others \cite{NW97,Liu+08}.
These compressions are used in ubiquitous applications such as zip, Snappy, GIF, PNG, the built-in Unix utility {\tt compress}, and even in PDF.
Their compression rates are often on a whole different level compared to RLE; e.g. the current draft of this paper reduces from 10KB to 4KB with zip but RLE has no effect. 
See Table~\ref{tab:compress-rates} and \cite[Table 1]{CLA_IBM_CACM19} for empirical data showing the quantitative potential of these methods for some standard ML datasets. 
What all these more elaborate compression techniques have in common is that they essentially (up to low order terms \cite{Rytter03}) encode a string (or vector) by a \emph{Straight-Line Program} (SLP): a restricted kind of a context-free grammar that can only produce one string.
In more detail, an SLP is defined over some alphabet $\Sigma$, say $\{0,1\}$, and it is a set of replacement rules (or productions) of a very simple form: a rule is either a symbol in $\Sigma$ or it is the concatenation of two previous rules  (under some fixed ordering of the rules). The last replacement rule is the sequence defined by the SLP.
For example, we can compress the sequence $01010101$ with the rules $S_1\to 0; \ \ S_2\to 1; \ \ S_3\to S_1\,S_2; \ \ S_4\to S_3\,S_3; \ \ S_5\to S_4\,S_4 \ $ and $S_5$ corresponds to the sequence $01010101$. 
See Figure~\ref{fig_slp}.
For some strings this can give an exponential compression, e.g. the sequence $(01)^N$ requires only $O(\log{N})$ rules; note that its RLE has size $N$.
While finding the smallest SLP for a given string is NP-Hard, it can be approximated either by the above practical methods or provably up to logarithmic factors \cite{Rytter03,Char+05,Sak05,Jez15,Jez16}.


\begin{table}
  \caption{The potential savings from grammar-compressed linear algebra: Compression rates on real datasets. We compare zip, a standard grammar-compression, with Run Length Encoding (RLE), a simple method that works well on repetitive or sparse data. For more such results, see \cite[Table 1]{CLA_IBM_CACM19}.}
  \label{tab:compress-rates}
  \centering
  \begin{tabular}{lccc}
    \toprule
    Dataset     & Size     & RLE (compression rate) & zip (compression rate) \\
    \midrule
    ISOLET~\cite{UCI} &  30.94 MB & 29.83 MB (0.96) & 7.94 MB (0.26) \\
    US Census 1990~\cite{UCI} &  342.26 MB & 341.97 MB (0.99) & 51.91 MB (0.15) \\
    \bottomrule
  \end{tabular}
\end{table}

Thus, the holy grail in this context is to perform algebraic operations in $T(\text{compression-size})$ time \emph{even when} the vectors are compressed with zip or one of the other heavy-duty grammar compressions; that is, without unzipping them first.
Ideally, we would implement a ``zip-inner-product'' function that takes two zip files encoding vectors and computes the inner product in near-linear time (which may not even be enough time to unzip them).
A recent paper titled ``When LZW meets ML'' \cite{LZW_ML17} makes partial progress towards this goal: the inner product can be computed efficiently on their \emph{tuple oriented coding} where each \emph{coordinate} is grammar-compressed separately, but not the vector as a whole. 
This makes their method less compressing since, unlike with zip, the size of the encoding is always at least the dimensionality of the vectors.

\begin{mainq}[Restated]
Given two $N$-dimensional vectors, each grammar-compressed down to size $n \ll N$, can we compute their inner product in $\tilde{O}(n)$ time rather than $O(N)$?
\end{mainq}

While efficiently analyzing these grammars may seem like a daunting task, a large body of works over the last three decades has equipped us with an ingenious toolbox exactly for this purpose.
It turns out that many important problems can indeed be solved surprisingly faster than the decompress-then-solve bound, e.g. in pattern matching  \cite{Pla94,KRS95,ABF96,FT95,CGLM06,LMWZ09,Gaw11,HLLW13,Jez15pm}.
This gives hope for a positive answer to our question and that many ML computations could be sped up by operating on grammar-compressions.
These algorithms typically look at the parse trees that have $N$ leaves but only $n$ distinctly labelled internal nodes (see Figure~\ref{fig_slp}), and traverse them starting from the root down, while attempting to only spend time proportional to the depth of the tree per distinct label. 
Using tricks that restructure the grammar to make the tree balanced, the depth can be upper bounded by $O(\log N)$, making the total time $O(n \log N)$.
To learn more about this subfield of Algorithm Design, we refer the reader to the surveys \cite{WMB99book,Lar99book,GKPR96survey,SB06survey,GSU09survey,RB10survey,Rytt04survey,Lohrey12,Sak14survey}.

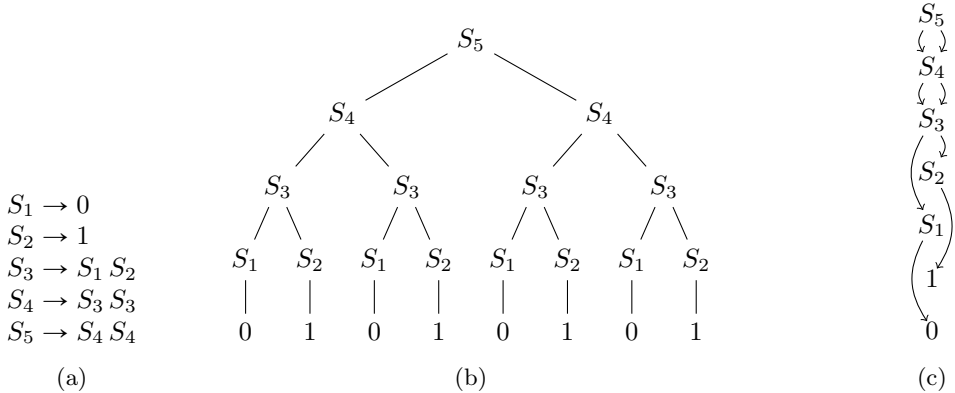
\begin{figure*}
	\centering
	
	\begin{subfigure}[b]{0.2\textwidth}
        \centering
			\begin{tabular}{ l }
			 $S_1\to 0$ \\
			 $S_2\to 1$ \\ 
			 $S_3\to S_1\, S_2$ \\ 
			 $S_4\to S_3\, S_3$ \\ 
			 $S_5\to S_4\, S_4$
			\end{tabular}
        \caption{}
        \label{fig_a}
    \end{subfigure}
	~
	\begin{subfigure}[b]{0.4\textwidth}
        \centering
        \begin{forest}
			[$S_5$
				[$S_4$[$S_3$[$S_1$[$0$]][$S_2$[$1$]]][$S_3$[$S_1$[$0$]][$S_2$[$1$]]]]
				[$S_4$[$S_3$[$S_1$[$0$]][$S_2$[$1$]]][$S_3$[$S_1$[$0$]][$S_2$[$1$]]]]
			]
		\end{forest}
        \caption{}
        \label{fig_b}
    \end{subfigure}
	~
	\begin{subfigure}[b]{0.3\textwidth}
        \centering
        \begin{tikzpicture}[
      shorten >=-3pt,
      shorten <=-3pt,
      auto,
      node distance=.7 cm,
      scale = 1,
      transform shape,
      state/.style={circle,inner sep=2pt}
      ]
			\node[state] (z0) {$0$};
			\node[state] (z1) [above of=z0] {$1$};
			\node[state] (s1) [above of=z1] {$S_1$};
			\node[state] (s2) [above of=s1] {$S_2$};
			\node[state] (s3) [above of=s2] {$S_3$};
			\node[state] (s4) [above of=s3] {$S_4$};
			\node[state] (s5) [above of=s4] {$S_5$};

			\path[->] (s1) edge [bend right] (z0);
			\path[->] (s2) edge [bend left] (z1);
			\path[->] (s3) edge [bend right=30] (s1);
			\path[->] (s3) edge [bend left] (s2);
			\path[->] (s4) edge [bend left] (s3);
			\path[->] (s4) edge [bend right] (s3);
			\path[->] (s5) edge [bend left] (s4);
			\path[->] (s5) edge [bend right] (s4);
		\end{tikzpicture}
        \caption{}
        \label{fig_c}
    \end{subfigure}
	
	\caption{(a) An SLP generating the sequence $01010101$. (b) The corresponding parse tree. (c) The acyclic graph corresponding to the SLP.}
	\label{fig_slp}
\end{figure*}

\subsection{Our Results}

Alas, our main result is a negative resolution to the main question above.
We apply the tools of theoretical computer science, and the recently blossoming field of \emph{fine-grained complexity}, in order to shed light into the mathematical foundations of Compressed Linear Algebra.
We prove new hardness reductions showing cases where the time to compute the inner product must be large (under popular complexity assumptions) even when the vectors have very small grammar compressions. 
For example, there are $N$-dimensional vectors with grammar-compressions of size $n=O(N^{1/3})$ where the inner product must take $\tilde{\Omega}(n^2)$ time\footnote{The more standard notation is $n^{2-o(1)}$ which indicates an $\Omega(n^{1.9999})$ lower bound, no matter how close to~$2$ we go. That is, only mildly subquadratic algorithms are possible, e.g. by shaving log factors.} to compute.
The consequences to other settings such as matrix-vector multiplication are further explained below.
This creates a strong separation between grammar-compressions, where we prove an $\tilde{\Omega}(n^2)$ lower bound, and RLE, where an $O(n)$ algorithm exists. This formally justifies the use of simpler methods in ML systems and guides researchers away from searching for an ultra-efficient ``zip-inner-product'' function.

\vspace{-0.2cm}

\paragraph{Fine-Grained Complexity} Negative results are paramount to the success of any scientific discipline. 
The most prominent framework for proving such results in computer science is the theory of NP-Hardness, where one proves that a problem cannot be solved in polynomial time unless $P = NP$ which would imply breakthrough algorithms for famously-hard problems such as SAT and Subset Sum.
Without this theory, countless hours would have been wasted by algorithm designers trying to come up with provable, worst-case, polynomial time algorithms for NP-Hard problems.
Due to the increase in data sizes of recent years, the ethos of this theory that ``efficient = polynomial'' has become obsolete, and a more demanding attitude where ``efficient = linear'' has arisen.
By replacing the polynomial reductions of NP-Hardness with more efficient ones (often linear), fine-grained complexity can prove hardness results even for problems that have polynomial time algorithms. 
Exemplary results show that linear or subquadratic
algorithms for certain problems, which admit quadratic-time algorithms,
would refute popular assumptions (conjectures that are similar to but stronger than $P \neq NP$) and have breakthrough consequences for famously hard problems.
One of the central assumptions in this theory and in this paper is the 3SUM Conjecture: ``\emph{No algorithm can decide, in subquadratic $O(n^{2-\eps})$ time, if there are three numbers that sum to zero among a given set of $n$ numbers}''.
A recent survey on fine-grained complexity \cite{Williams18_ICM} cites dozens of papers, mainly in computational geometry \cite{GO95} but also in other fields \cite{Pat10,vassilevska2009finding,abboud2014popular,abboud2014consequences,chen2009approximate,kopelowitz2016higher,jumbled3sum,goldstein2017hard}, that prove 3SUM-Hardness results showing that their algorithms are optimal up to a refutation of this conjecture. 
In this paper, we prove the first 3SUM-Hardness results in ML\footnote{We remark that \emph{some} complexity assumption is necessary for proving the kind of results we are interested, since unconditionally proving even very weak lower bounds on the time complexity such as $\Omega(n^{1+\eps})$ and even for NP-Hard problems like SAT (not to mention inner product) is far beyond current techniques \cite{ABbook}.} as far as we are aware.
The 3SUM assumption and its generalizations that we use in the theorems below are formally defined and discussed in Section~\ref{sec:prelim}. 

\vspace{-0.2cm}

\paragraph{Vector Inner Product}
Our first and main result is a reduction from 3SUM to compressed inner product of two vectors, negatively resolving our main question.

\begin{theorem}\label{thm:vectorInnerProduct}
Assuming the 3SUM conjecture, the inner
product of two $N$-dimensional vectors that are grammar-compressed to size $n=\Theta(N^{\frac{1}{4}})$
cannot be computed in $O(n^{2-\eps})$ time where $\eps>0$.
\end{theorem}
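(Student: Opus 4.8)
The plan is to reduce 3SUM to the compressed inner product problem. By the standard self-reductions for 3SUM (in the form stated in Section~\ref{sec:prelim}) we may assume the following convenient shape of the input: three sets $A,B,C$ of $m$ integers each, all lying in $\{-m^3,\dots,m^3\}$, and we must decide whether there exist $a\in A,\ b\in B,\ c\in C$ with $a+b+c=0$; under the 3SUM conjecture this requires $m^{2-o(1)}$ time. Shifting, set $A'=\{a+m^3:a\in A\}$ and likewise $B',C'$, so $A',B',C'\subseteq[0,U]$ with $U=2m^3$, and the condition becomes: there exist $a'\in A',b'\in B',c'\in C'$ with $a'+b'=3m^3-c'$. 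For $X\subseteq[0,U]$ let $\chi_X\in\{0,1\}^{U+1}$ be its indicator vector and let $\star$ denote convolution. Since $(\chi_{A'}\star\chi_{B'})[w]$ counts the pairs $(a',b')\in A'\times B'$ with $a'+b'=w$, the instance is a YES-instance iff $\sum_{c'\in C'}(\chi_{A'}\star\chi_{B'})[3m^3-c']>0$ (a sum of nonnegative terms, in fact equal to the number of solution triples).

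Next I would build two vectors $u,v$ of dimension $N=m\cdot L$ with $L=\Theta(U)$, organized into $m$ blocks of length $L$ indexed by the elements $c'\in C'$, so that the dot product of the $c'$-th blocks equals $(\chi_{A'}\star\chi_{B'})[3m^3-c']$. Concretely, every block of $u$ is the same: a fixed copy of $\chi_{A'}$ placed at a fixed offset, zero-padded to length $L$. In the $c'$-th block of $v$ we place the reversed indicator $\chi_{B'}^{R}$ at an offset that depends on $c'$ (chosen so that coordinate $j$ of the block holds $\chi_{B'}$ evaluated at $3m^3-c'$ minus the coordinate of $\chi_{A'}$ aligned with $j$), again zero-padded. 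A short index computation shows the within-block dot product is exactly $(\chi_{A'}\star\chi_{B'})[3m^3-c']$; hence $\langle u,v\rangle=\sum_{c'\in C'}(\chi_{A'}\star\chi_{B'})[3m^3-c']$ equals the number of solution triples, and in particular $\langle u,v\rangle>0$ iff the 3SUM instance is a YES-instance. There is no cross-block interference because distinct blocks occupy disjoint coordinate ranges, and taking $L$ slightly larger than $4U$ rules out any wraparound inside a block.

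The crux is that $u$ and $v$ admit straight-line programs of size $n=\tO(m)$ despite $N=\Theta(m^4)$. Two facts suffice. First, a run of $r$ zeros is generated by an SLP with $O(\log r)$ rules, and arbitrarily many zero runs can all reuse one shared ``doubling tower'' $Z_0\to 0,\ Z_i\to Z_{i-1}Z_{i-1}$ of $O(\log U)$ rules, so attaching a zero prefix/suffix of any length to a block costs only $O(\log U)$ extra rules. Second, the indicator vector $\chi_X$ of a set $X\subseteq[0,U]$ with $|X|=m$ is a concatenation of $m$ ones and $m+1$ zero runs, hence has an SLP of size $O(m\log U)$ (reusing the tower). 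Consequently $v$ is $m$ blocks, each obtained from a single precompressed copy of $\chi_{B'}^{R}$ ($O(m\log U)$ rules) by prepending/appending zeros according to $c'$ ($O(\log U)$ rules per block) and then chaining the blocks ($O(m)$ rules) — total $O(m\log U)=\tO(m)$; and $u$ is just $m$ identical copies of one zero-padded copy of $\chi_{A'}$, which is even cheaper. Moreover the whole reduction, i.e.\ emitting the two SLPs, runs in $\tO(m)$ time.

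Finally, with $U=2m^3$ and $L=\Theta(U)$ we get $N=\Theta(m^4)$ and $n=\tO(m)$; appending to both vectors a structured all-zero tail of a suitable polynomial length (adding only $O(\log N)$ rules) balances the two quantities so that $n=\Theta(N^{1/4})$ exactly. If compressed inner product of two grammar-compressed vectors of size $n$ were computable in $O(n^{2-\eps})$ time, then together with the $\tO(m)$-time reduction we would decide the 3SUM instance in $\tO(m)+O\big((\tO(m))^{2-\eps}\big)=O(m^{2-\eps/2})$ time, contradicting the 3SUM conjecture; this proves the theorem. The main obstacle — and really the whole point of the result — is to design a gadget that is \emph{simultaneously} (i) correct, in that its inner product counts 3SUM witnesses, and (ii) grammar-compressible to $\tO(m)$ even though its dimension is $\Theta(m^4)$. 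The convolution-based construction achieves (ii) precisely because the only thing that differs across the $m$ blocks of $v$ is a shift, and shifts are just zero-padding, which grammars compress logarithmically. (Using instead the variant of the 3SUM conjecture with universe $\Theta(m^2)$, the same construction would give $n=\Theta(N^{1/3})$.)
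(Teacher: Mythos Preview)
Your proposal is correct and follows essentially the same approach as the paper. Both constructions split the vectors into $m$ blocks of length $\Theta(U)$, make one vector consist of $m$ identical copies of a single set's indicator, and make the other vector consist of $m$ shifted copies of a second set's indicator (with the shift amounts coming from the third set); the compression argument is the same in both, namely that a single shared nonterminal for the indicator is reused across all blocks and only the $O(\log U)$ zero-padding differs per block. The only differences are cosmetic: you index blocks by $c'\in C'$ and phrase the within-block dot product as a convolution of $\chi_{A'}$ with a reversed $\chi_{B'}$, whereas the paper indexes blocks by $a_i\in A$ and writes the block directly as $0^{a_i}v_B0^{U-a_i}$ matched against a repeated $v_C$; the resulting dimensions, compressed sizes, and final calculation are identical.
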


Moreover, we strengthen and generalize this result in several ways. 
First, we address the dependence between $n$ and $N$: could it be that for more or less compressed vectors the picture is different?
Using a stronger variant of the 3SUM conjecture, the same lower bound of $n^2$ holds even when $n=N^{1/3}$, and therefore our result can be stated as an $\tilde{\Omega}(N^{\frac{2}{3}})$ lower bound which is quite close to the trivial upper bound of $O(N)$. 
Moreover, by a (highly nontrivial) boosting of our reduction, in Section~\ref{sec:VV} we establish an $\tilde{\Omega}(N^{\frac{1}{3}})$ lower bound with $n=N^{\eps}$ for any $\eps \leq 1/3$. That is, when the vectors are highly compressed even $n^{10}$ time is not sufficient\footnote{Strictly speaking, such a conditional lower bound of $\Omega(n^{10})$ for \emph{highly compressible} inputs can already be proven by combining a known \#P-hardness reduction from SubsetSum~\cite{Lifshits07} with a fine-grained hardness of SubsetSum under the Exponential Time Hypothesis (see, e.g.~\cite{JansenLL16}). However, such an approach yields only a weak lower bound in terms of the uncompressed size $N$, namely a bound of $\Omega(N^\epsilon)$ for some non-explicit, possibly tiny $\epsilon$. Our lower bounds always give an explicit, reasonably large value for $\epsilon$.}; this is in stark contrast to the case of RLE-compressed vectors where $O(n)$ is always possible.

\vspace{-0.2cm}

\paragraph{Matrix-Vector Multiplication}
Next, we consider the problem of computing the $M \cdot v$ product of an $N$-dimensional vector $v$ that is compressed to size $n$ with an $N \times N$ matrix $M$ where each row is compressed to size $O(n)$.
Perhaps computing these $N$ inner products as a batch can be done faster than computing each separately.
Alas, by another significant boosting of our reduction we prove that this is also impossible.
While if the encoding is with RLE the product can be computed in $O(Nn)$ time, which is linear in the representation size of the matrix and thus optimal, it turns out that for grammar compressions $\tilde{\Omega}(Nn^2)$ is required. The proof is in Section~\ref{sec:MV}.

\begin{theorem}
\label{thm:MV}
Assuming the 3SUM conjecture, the product of
an $N\times N$-dimensional matrix, where each row is grammar-compressed
to size $n=\Theta(N^{\frac{1}{5}})$, with an $N$-dimensional vector that is grammar-compressed to size $n$ cannot be computed 
in $O(Nn^{2-\eps})$ time where $\eps>0$.\end{theorem}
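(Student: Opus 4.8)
The plan is to prove Theorem~\ref{thm:MV} by \emph{amplifying} the vector inner-product reduction behind Theorem~\ref{thm:vectorInnerProduct}: rather than packing one hard inner-product instance into a single pair of vectors, we spread one $3$SUM instance across the $N$ rows of the matrix, so that the whole product $M\cdot v$ is as expensive as solving the original $3$SUM instance. The first step is to recall the shape of the reduction from Theorem~\ref{thm:vectorInnerProduct} (or, if necessary, massage it into this shape): from a suitably structured $3$SUM instance it produces two vectors $u$ and $w$ such that $\langle u,w\rangle$ counts the $3$SUM solutions, where $u$ encodes the numbers and is grammar-compressible because of the arithmetic structure of the instance, while $w$ depends \emph{only on the instance size} and not on its data. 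The vector $w$ will play the role of the shared $v$ in $M\cdot v$, and each row $M_i$ will be the ``$u$-side'' of a sub-instance.

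Second, I would invoke self-reducibility of $3$SUM: a size-$m$ instance $A,B,C$ splits, via the standard sort-and-block trick (equivalently, cite the Kopelowitz--Pettie--Porat / P\u{a}tra\c{s}cu-style self-reductions), into $\Theta((m/g)^2)$ sub-instances, one per pair $(A_i,B_j)$ of length-$g$ blocks; since the blocks occupy bounded coordinate ranges (after hashing the universe down to size $O(m)$), only a length-$O(g)$ contiguous window $C_{ij}$ of $C$ is relevant, so each sub-instance $(A_i,B_j,C_{ij})$ is again a \emph{structured} $3$SUM instance of size $\Theta(g)$ and inherits the compressibility needed by the Theorem~\ref{thm:vectorInnerProduct} reduction. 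Applying that reduction to sub-instance $(i,j)$ yields a row vector $M_{(i,j)}$ of grammar size $\Theta(g)$ together with the \emph{same} universal vector $v:=w$; we then zero-pad all rows and $v$ up to dimension $N$, which is free for SLPs. Choosing the block size $g$ and $m$ so that $(m/g)^2=\Theta(N)$ while the per-row vector dimension produced by the reduction still fits in $N$ columns — optimizing these two constraints against each other — pins the compression to $n=\Theta(N^{1/5})$ and $m=\Theta(N^{7/10})$. Since $\sum_{i}(M v)_i$ equals the total number of $3$SUM solutions, an $O(Nn^{2-\eps})$-time algorithm for $M\cdot v$ would solve $3$SUM in time $O\big((Nn^{2-\eps})\cdot N^{o(1)}\big)=O(m^{2-\eps'})$, contradicting the $3$SUM conjecture; and building $M$ and $v$ costs only $\tilde{O}(m+Nn)=N^{1+o(1)}$, comfortably inside the budget, so the reduction is genuinely fine-grained.

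The main obstacle is making the three requirements hold \emph{simultaneously}: (i) the vector $v$ must be instance-independent (forcing us to put the shared ``universe side'' $w$ on the $v$-side and \emph{all} of each sub-instance's data, including its $C$-window, on the row side); (ii) every row must still compress to $\Theta(n)$, which requires the split to produce sub-instances that remain structured enough for the Theorem~\ref{thm:vectorInnerProduct} reduction to apply with no loss; and (iii) the $N$ rows together must losslessly encode a genuinely hard $3$SUM instance, i.e.\ the self-reduction must lose only an $N^{o(1)}$ factor. This tension is exactly what degrades the exponent from $1/4$ (for a single inner product) to $1/5$ here, since the block size $g$ controls the per-row compression, the number of rows, and the per-row dimension all at once. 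The remaining work is routine bookkeeping: verifying that hashing the universe costs only polylog factors and introduces no false positives (repeat with $O(\log N)$ hash functions, or phrase everything via Convolution-$3$SUM), that each window $C_{ij}$ is indeed localized so it too compresses to $\Theta(g)$, and — if one wants the sharper $n$-vs-$N$ trade-off — that the strengthened reductions from Section~\ref{sec:VV} can be substituted for Theorem~\ref{thm:vectorInnerProduct} in the per-row step. The full argument appears in Section~\ref{sec:MV}.
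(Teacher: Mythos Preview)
Your proposal is essentially the paper's approach: self-reduce $3$SUM into $L=\Theta((m/s)^2)$ size-$s$ sub-instances sharing a common target $t$, encode all of $A_\ell,B_\ell,C_\ell$ into row $M_\ell$ (as the characteristic vector of the three-fold sumset $A_\ell+B_\ell+C_\ell$, built by the same shift-and-repeat trick as Theorem~\ref{thm:vectorInnerProduct}), and let $v$ encode only the singleton $\{t\}$; balancing $L$ against the row dimension $\Theta(s^2 U)$ yields $s=\Theta(m^{2/7})$, $N=\Theta(m^{10/7})$, $n=\Theta(s\log s)=\Theta(N^{1/5})$, exactly as you state.

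Two details to fix. First, your line ``hashing the universe down to size $O(m)$'' is inconsistent with your own parameters: with $U=O(m)$ the row dimension $s^2 U$ would be $\Theta(N^{11/10})$, too large. The paper's self-reduction (Lemma~\ref{lem:selfreduction}) hashes the universe down to $O(s^3\mathrm{polylog}\,s)$ \emph{per sub-instance}, which is what makes $s^2 U = \Theta(s^5) = \Theta(N)$ work. Second, because that hashing is randomized and may introduce false positives, you cannot just read off $\sum_\ell (Mv)_\ell$ as the solution count; you must examine each entry $(Mv)_\ell$ separately and wrap the whole reduction in $O(\log m)$ repetitions with fresh primes, as the paper does.
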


\vspace{-0.2cm}

\paragraph{Matrix Multiplication} Finally, we consider matrix multiplication of compressed matrices $C=A \cdot B$. 
There are multiple ways to compress an $N\times N$ matrix: we might compress each row or each column, so that the compression size is $N\cdot n$, or treat the whole matrix as an $N^2$-dimensional vector and compress it to size $n$. 
Each way may lead to a different time complexity, but no matter which way we choose, the first question to ask, and that will determine the time we can hope for, is: what is the output size?
The na\"{i}ve answer is that the matrix $C$ has size $N \times N$, but since $A$ and $B$ are compressed, shouldn't we expect $C$ to also be  representable with a small grammar of size $n \ll N^2$?
Unlike the above questions that deal with computation time, this is an information-theoretic question, and in Section~\ref{sec:MM} we give strong and unconditional negative answers: the matrix $C$ cannot be grammar-compressed to size $o(N^2/\log^2{N})$ \emph{even} when $A$ and $B$ are strongly compressible.
Moreover, some of our results hold even when $A$ and $B$ have very small RLE encodings.
Therefore, our results should be of interest to the compressed linear algebra project beyond grammar-compressions.

\vspace{-0.2cm}

\paragraph{Technical Remarks}
While the tools for proving NP-Hardness results for grammar-compressed data are old \cite{Lohrey12}, they only apply in the unrealistic setting where $n = \log{N}$, and we are interested in more fine-grained results.
Only recently, a FOCS paper by the authors \cite{ABBK17} introduced the techniques for proving such lower bounds.
This previous work focused on combinatorial pattern matching problems and the current work extends it to the setting of linear algebra.
Our results establish the hardness even of the simplest setting of binary vectors and matrices over $\{0,1\}$. 
This setting is particularly studied due to its connection to graphs, where grammar compressions have also received a lot of attention \cite{maneth2015survey,maneth2018grammar}.
Moreover, we show that even deciding if the inner product is $0$ or $\geq 1$ is hard, and so our lower bounds hold against any bounded approximation algorithms. 
Extending the lower bounds to other functions such as computing the $\ell_2$ distance between two vectors is also easy.
Like almost all results in fine-grained complexity \cite{Williams18_ICM}, our lower bounds are against both deterministic and randomized algorithms. 

Finally, we remark that our lower bounds are for the most basic setting of \emph{worst-case} instances.
Extending them to \emph{average-case} results, showing that instances that come from certain natural distributions are also hard, is an open question.
However, notice that even if the original vectors come from a natural distribution, the distribution of the grammar representations will be completely different (and probably far from natural).
Therefore, exploiting the structure of non-worst-case instances seems far beyond current reach in this context.

\subsection{Other Related Works}
There have been a few recent works showing fine-grained complexity results for machine learning problems. In particular,~\cite{backurs2017improving} showed that the classic algorithm of Viterbi that computes the most likely path in a Hidden Markov Model which results in a given sequence of observations is essentially optimal assuming certain complexity theoretical hypotheses.
Another work~\cite{backurs2017fine} showed conditional hardness results for multiple empirical risk minimization problems such as kernel support vector machines, kernel ridge regression, and training the final layer of a neural network.
Furthermore, there are many works that show hardness for problems that are used in machine learning literature. This includes conditional lower bounds for kernel low-rank approximation~\cite{musco2017input}, closest pair and its variants~\cite{alman2015probabilistic, rubinstein2018hardness, williams2018difference, chen2019equivalence, david2019complexity, cs2018closest}, maximum inner product~\cite{abboud2017distributed, chen2018hardness, chen2019fine}, earth mover's distance (a.k.a.\ Wasserstein  metric)~\cite{rohatgi2019conditional}, dynamic time warping distance~\cite{abboud2015tight, bringmann2015quadratic}.
 
Further contexts in which lossless compressions are used for ML applications, where the primary focus is on other aspects than increasing algorithmic performance, include compressing and accelerating models for deployment on resource-constrained devices (see~\cite{HanMD16,ChoudharyMGS20}; e.g., lossless compressions are used to compress weights after a quantization step) or implementing the principle of minimum description length for feature learning (see~\cite{PaskovWMH13}).

Outside of ML, the idea of improving efficiency by operating on (losslessly) compressed data is well-established in databases 
\cite{RDBMS_compression06,RDBMS_compression01,RDBMS_compression00,RDBMS_compression94}, and is gaining traction also in bioinformatics \cite{GTRAC16}.

\newtheorem*{3sumconj}{The 3SUM Conjecture}

\section{Preliminaries}
\label{sec:prelim}

As described in Section~\ref{sec:intro}, a grammar compression of a sequence (or a vector) is an SLP that produces the sequence. In our proofs we will use the following simple observation about SLPs.

\begin{proposition}\label{prop:SLPrepetition}
Let $\mathcal{G}$ be an SLP with start symbol $S$ that generates a sequence $s$. For any $\alpha \in \mathbb{N}$, we can compute an SLP $\mathcal{G}'$ that generates the $\alpha$-fold repetition of $s$, i.e., \[s^\alpha = \underbrace{s\;s\;\cdots\;s}_{\alpha \text{ times}},\] and has size $|\mathcal{G}| + O(\log \alpha)$ in time $O(|\mathcal{G}'|)$.
\end{proposition}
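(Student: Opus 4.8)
The plan is to use the standard repeated-doubling (``repeated squaring'') trick on top of the given grammar, driven by the binary expansion of $\alpha$. First I would keep all rules of $\mathcal{G}$ intact, so that the nonterminal $S$ still derives $s$; this already accounts for the $|\mathcal{G}|$ term and costs $O(|\mathcal{G}|)$ time to copy. Then I introduce a chain of $\ell+1$ fresh nonterminals $T_0, T_1, \dots, T_\ell$, where $\ell = \lfloor \log_2 \alpha \rfloor$, defined by $T_0 \to S$ and $T_i \to T_{i-1}\, T_{i-1}$ for $1 \le i \le \ell$. A trivial induction on $i$ shows that $T_i$ derives $s^{2^i}$.

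Next I combine these blocks according to the binary representation $\alpha = \sum_{i \in B} 2^i$, where $B \subseteq \{0,1,\dots,\ell\}$ is the set of positions of the $1$-bits of $\alpha$. Enumerating $B = \{i_1 < i_2 < \dots < i_m\}$, I introduce fresh nonterminals $U_1, \dots, U_m$ with $U_1 \to T_{i_1}$ and $U_k \to U_{k-1}\, T_{i_k}$ for $2 \le k \le m$. Since concatenations of copies of the same string satisfy $s^a s^b = s^{a+b}$, a second induction gives that $U_k$ derives $s^{\,2^{i_1} + \cdots + 2^{i_k}}$, and in particular $U_m$ derives $s^{\alpha}$. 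I take $U_m$ as the start symbol of $\mathcal{G}'$. The degenerate case $\alpha = 1$ is handled by simply returning $\mathcal{G}$ (and I use $\alpha \in \mathbb{N}$ to mean $\alpha \ge 1$, so the empty string need not be produced).

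Finally I would verify the two quantitative claims. The number of rules added on top of $\mathcal{G}$ is $(\ell+1) + m \le 2(\ell+1) = O(\log\alpha)$, and each new rule has a right-hand side of length at most $2$; hence $|\mathcal{G}'| = |\mathcal{G}| + O(\log\alpha)$ under either common notion of grammar size (number of rules, or total length of right-hand sides). For the running time, computing $\ell$ and the bit-set $B$ takes $O(\log\alpha)$ time, emitting the new rules takes $O(\log\alpha)$ time, and copying $\mathcal{G}$ takes $O(|\mathcal{G}|)$ time, for a total of $O(|\mathcal{G}| + \log\alpha) = O(|\mathcal{G}'|)$.

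I do not expect a genuine obstacle here; the only points requiring care are bookkeeping rather than mathematics: ordering the fresh nonterminals after all rules of $\mathcal{G}$ so that the acyclicity/ordering condition defining an SLP is preserved, and making sure the size overhead is \emph{additive} and not multiplicative — which is precisely why the construction reuses the nonterminal $S$ instead of re-deriving $s$ from scratch.
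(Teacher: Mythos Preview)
Your proposal is correct and follows essentially the same approach as the paper: repeated squaring to obtain nonterminals for $s^{2^i}$, followed by combining them according to the binary representation of $\alpha$. Your write-up is in fact more detailed than the paper's proof sketch, which simply states these two steps without spelling out the bookkeeping.
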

\begin{proof}[Proof sketch]
Using $O(\log \alpha)$ repeated squaring rules $S_i \rightarrow S_{i-1}S_{i-1}$ and $S_0 \rightarrow S$, we obtain non-terminals $S_0, \dots, S_{\lfloor \log_2 \alpha\rfloor}$ generating $s^{2^i}$ for $i\in \{0,\dots,\lfloor \log_2 \alpha \rfloor\}$. It is straightforward to combine these non-terminals, according to the binary representation of $\alpha$, to generate $s^{\alpha}$ using only $O(\log \alpha)$ additional non-terminals.
\end{proof}

\vspace{-0.2cm}

Using this property, we can often compress sequences much more efficiently than run-length encoding alone could: E.g., repetitive patterns like $(010011)^{n}$ can be encoded using only $\Theta(\log n)$ bits instead of $\Theta(n)$. Indeed, our constructions crucially exploit a repeated application of this property to compress hard instances to very small sizes.

\vspace{-0.2cm}

\paragraph{The Complexity Assumptions}
As discussed in Section~\ref{sec:intro}, the impossibility results in fine-grained complexity are based on certain popular conjectures. One of the central ones concerns the 3SUM problem, which has a few equivalent formulations (up to linear time transformations \cite{3LDT}); we will mostly use the following\footnote{For example, instead of $a+b=c$ or  $a+b+c=0$ we may be given a target $t$ and ask for $a+b+c=t$.}.

\begin{definition}[The 3SUM Problem]
Given three sets $A,B,C$ of $m$ integers in $\{1,\ldots, U\}$, decide if there is a triple $a \in A, b \in B, c \in C$ such that $a+b=c$.
\end{definition}

It is a simple exercise (that is often given in interviews) to come up with an $O(m^2)$ time algorithm, and despite decades of efforts, only mildly subquadratic $O(m^2/\log^c m)$ bounds for a small $0<c<3$ are known \cite{BDP05,GP14,GS15,freund2017improved,chan2019more}.

\begin{3sumconj}
No algorithm can solve the 3SUM problem in $O(m^{2-\eps})$ time, where $\eps>0$.
\end{3sumconj}

A few remarks about this conjecture.
First, a folklore trick of taking all numbers modulo a random large prime shows that the problem for arbitrary universe $U$ is equivalent to the case where $U=O(m^{3} \log^2 m)$  (see Lemma B.1 in \cite{abboud2014losing} for a proof). 
Therefore, we will assume this bound on $U$.
When $U$ becomes too small, the problem becomes easy due to an $O(m+U \log U)$ algorithm using Fast Fourier Transform \cite{CLRSbook}.
However, the problem is conjectured to be hard even when $U=\Theta(m^2)$ and this is referred to as the Strong 3SUM Conjecture \cite{jumbled3sum,ABBK17}. This stronger assumption allows us to strengthen our lower bounds by reducing $N$.
Second, the hardness of the more general kSUM problem is also used as a complexity assumption \cite{abboud2013exact,ABBK17}. 
In the formulation that we will use, we are given $k$ sets $A_1,\ldots,A_k$ of $m$ integers in $\{1,\ldots,U\}$ where $U=\Theta(m^{\lceil k/2 \rceil})$ and are asked to decide if there are $k$ numbers, one from each set, such that $a_1+\cdots+a_{k-1}=a_k$. The Strong kSUM conjecture states that cannot be done in  $O(m^{\lceil k/2 \rceil-\eps})$ time, for any $\eps>0$. We will use this assumption to prove lower bounds even when $n$ is much smaller than $N$.
Third, 3SUM and the other hardness assumptions in fine-grained complexity are conjectured to be true even against randomized algorithms that succeed with high probability. This is important since some of our reductions are randomized.

\section{Vector Inner Product}
\label{sec:VV}

In this section we present the proof of Theorem~\ref{thm:vectorInnerProduct} by giving a reduction from 3SUM to the inner product of compressed vectors.
A slightly weaker conditional lower bound of $\tilde{\Omega}(N^{1/2})$ for vectors compressible to $n=N^{1/4}$ can be extracted from the proof of Theorem 5.11 in \cite{ABBK17}. 
We use similar tricks, but a different and more optimized construction to obtain a stronger conditional lower bound of $\tilde{\Omega}(N^{2/3})$ already on less compressible vectors with $n=N^{1/3}$. Technically, the novelty is that we manage to encode two sets ($A$ and $B$) into one vector of length $mU$ rather than $m^2U$.
This new construction is crucial for the extensions we show -- we do not see how to prove any lower bound for matrix-vector inner product without building on this new construction.  

\begin{proof}
Given an instance of 3SUM, that is, three sets $A,B,C$ of $m$ integers in $\{1,\dots,U\}$, we show
how to construct vectors $v{}_{A+B}',v_{C}'\in\{0,1\}^{N}$ with $N=2mU\log^{2}m$
such that: (1) $v_{A+B}'\cdot v_{C}'\ge1$ if and only there $a\in A,b\in B,c\in C$
with $a+b=c$, (2) both vectors have a compression of size $O(m\log U)$, and (3) the construction time is $O(m\log U)$.

This reduction suffices for proving Theorem~\ref{thm:vectorInnerProduct} due to the following calculations.
Since (as discussed in Section~\ref{sec:prelim}) we can assume that $U=\Theta(m^{3}\log^2 m)$, the reduction produces two vectors of dimension $N=\Theta((m\log m)^{4})$
and compressed size $n=\Theta(N^{1/4})=\Theta(m\log m)$, such that the inner product reveals the answer to the 3SUM instance. 
Therefore, an $O(n^{2-\eps})$-time algorithm would solve the 3SUM instance in time $O(m^{2-\varepsilon}\mathrm{polylog}m)$, refuting the 3SUM conjecture.
Note that the $O(m\log U)$ time for the reduction itself is negligible.
Moreover, if we assume the Strong 3SUM conjecture, we can start with 3SUM instances where $U=O(m^{2})$ and get vectors of dimension $N=O((m\log m)^{3})$, ruling out inner product algorithms with time $O(N^{\frac{2}{3}-\eps})$.

We now present the construction of the vectors.
As a first step, we observe that for any set $X\subseteq\{1,...,U\}$,
we can compress its characteristic vector $v_{X}\in\{0,1\}^{U}$,
i.e., $v_{X}[i]=1$ iff $i\in X$, to size $O(|X|\log U)$ as follows.
We write $X=\{x_{1},\dots,x_{|X|}\}$ with $x_{1}<x_{2}<\dots<x_{|X|}$
and observe that
\[
v_{X}\coloneqq0^{x_{1}-1}\,1\,0^{x_{2}-x_{1}-1}\,1\dots1\,0^{x_{|X|}-x_{|X|-1}-1}\,1\,0^{U-x_{|X|}},
\]
where each 0-block has length at most $U$ and can thus be encoded
using $O(\log U$) symbols using Proposition~\ref{prop:SLPrepetition}. In total, we obtain a compression of size
$O(|X|\log U)$, which can be computed in time $O(|X|\log U)$ as
well.

Let $A=\{a_{1},\dots,a_{n}\}$. The central idea is to let $v'_{A+B},v'_{C}$
consist of $m$ blocks of size $2U$, where the $i$-th block in $v'_{A+B}$
gives the characteristic vector of the set $a_{i}+B=\{a_{i}+b\mid b\in B\}\subseteq\{1,\dots,2U\}$
and the $i$-th block in $v'_{C}$ gives the characteristic vector
of $C\subseteq\{1,\dots,2U\}.$ Formally, we define

\[
\begin{array}{ccccccc}
v_{A+B}' & \coloneqq & \underbrace{0^{a_{1}}v_{B}0^{U-a_{1}}}_{v_{a_{1}+B}'} & \underbrace{0^{a_{2}}v_{B}0^{U-a_{2}}}_{v_{a_{2}+B}'} & \dots & \underbrace{0^{a_{m}}v_{B}0^{U-a_{m}}}_{v_{a_{m}+B}'} & 0^{N-2mU},\\
v_{C}' & \coloneqq & v_{C}0^{U} & v_{C}0^{U} & \dots & v_{C}0^{U} & 0^{N-2mU}.
\end{array}
\]

(Here, the last block of $0$s only serves to get the desired dimension
of $N$ for technical reasons.) We observe that $v_{A+B}'$ and $v'_{C}$
have an inner product of at least $1$ if and only if the characteristic
vectors of some block $i$ have a common $1$-entry. Thus, consider
any block $i$: We have $v_{a_{i}+B}'[k]=(v_{C}0^{U})[k]=1$ if and
only if $k-a_{i}\in B$ and $k\in C$, i.e., $a_{i}\in A,k-a_{i}\in B,k\in C$
is a solution of the given 3SUM instance. Thus, $v_{A+B}'\cdot v_{C}'\ge1$
if and only if there is some $a\in A,b\in B,c\in C$ such that $a+b=c$,
as desired.

It remains to show that a $O(m\log U)$-sized compression of $v_{A+B}'$
and $v_{C}'$ can be computed in time $O(m\log U)$: Clearly, since
$v_{C}0^{U}$can be compressed to size $O(m\log U)$ efficiently,
we can also compress its $m$-fold repetition using $O(\log m)$ additional
symbols using Proposition~\ref{prop:SLPrepetition}, as well $0^{N-2mU}$ which takes $O(\log N)=O(\log mU)$
additional symbols; thus, $v_{C}'$ can be compressed to size $O(m\log mU)$
in time $O(m\log U)$. Furthermore, recall that we can compress $v_{B}$
to size $O(m\log U)$ efficiently, and let $\mathcal{G}$ be an SLP with starting symbol $S_B$ generating $v_B$. Thus, to compress $v_{a_i+B}$, we only need to compress the surrounding blocks $0^{a_i}$, $0^{U-a_i}$ and can reuse $S_B$ to generate $v_B$. Since we can encode the $0$-blocks using $O(\log U)$ additional non-terminals, this yields a compression size of $O(\log U)$ per block $i$. Together with a $O(\log mU)$ encoding
of the trailing block $0^{N-2mU}$, this yields again a compression of size $O(m\log U)$. Note that reusing a non-terminal generating $v_B$ was instrumental in giving a compression of size $O(m\log m)$ rather than $O(m^2\log m)$ and that this compression can indeed be computed in time $O(m\log U)$ and concludes the claim.
\end{proof}

With more work, the above arguments can be generalized to reduce a $k$SUM instance with $k$ sets of $m$ integers in $\{1,\dots, U\}$ to vectors of dimension $N=\Theta(m^{k-2}U)$ and compressed size $O(m\log U)$ in time $O(m \log U)$. 
The main idea is to encode a shift of $A_{k-1}$ for each tuple of $A_1,\ldots,A_{k-2}$ in one vector, and encode $m^{k-2}$ repetitions of the remaining set $A_k$ in the other vector.
Under the Strong $k$SUM conjecture, this yields a conditional lower bound for inner product of $\tilde{\Omega}(N^{1/3})$ where $n=O((N/U)^{1/(k-2)}\log N)$. Thus, for any fixed $\eps > 0$, let $k$ be a sufficiently large constant integer such that $1/(k-2) < \eps$, then the Strong $k$SUM conjecture implies that $N$-dimensional vectors with compressed size $n=O(N^\eps)$ cannot have an $O(N^{1/3-\delta})$ algorithm for any constant $\delta > 0$. We formally prove the result in the appendix.

\section{Matrix-Vector Multiplication}
\label{sec:MV}

In this section we sketch how to prove Theorem~\ref{thm:MV} by giving a reduction from 3SUM to Matrix-Vector multiplication on compressed data. We give a complete formal proof in the appendix.

A helpful tool for this task is the following self-reduction for 3SUM, which follows from combining a known self-reduction~\cite{LincolnWWW16} with a standard universe-size reduction technique on each produced instance~\cite{BDP05, Pat10,abboud2014losing}.  

\begin{lemma}[Self-Reduction for 3SUM]\label{lem:selfreduction}
Let $1\le s=s(m) \le m$ and $\epsilon > 0$ be arbitrary. If there is an algorithm that, given a target $t$ and $L=O( (m/s)^2 )$ sets $A_\ell,B_\ell,C_\ell$ of $s$ integers in $\{1,\dots,O(s^3\log^2 s)\}$, determines for all $1\le \ell \le L$ whether there are $a\in A_\ell, b\in B_\ell, c\in C_\ell$ with $a+b+c=t$ in total time $O(m^{2-\epsilon})$, then the 3SUM conjecture is false. 
\end{lemma}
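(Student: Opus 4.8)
The plan is to compose the two ingredients named in the statement: the deterministic group self-reduction of~\cite{LincolnWWW16}, which turns one 3SUM instance on $m$ numbers into a batch of $O((m/s)^2)$ solvability-preserving sub-instances on $O(s)$ numbers, and the folklore randomised universe-size reduction of~\cite{BDP05,Pat10,abboud2014losing}, applied to each sub-instance separately to shrink its universe to $O(s^3\log^2 s)$. Before either step I would normalise: negating $C$ and shifting all three sets by a common constant puts the instance in the form ``decide whether some $a\in A,b\in B,c\in C$ satisfy $a+b+c=t$'' for a given target $t$ with positive entries, and the random-prime trick~\cite{abboud2014losing} lets me assume the original universe has size $U=O(m^3\log^2 m)$.

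For the self-reduction, sort each of $A,B,C$ and cut it into $m/s$ consecutive groups of $s$ elements. Any solution lies in a unique group-triple $(A_i,B_j,C_k)$, and because the groups are sorted intervals, such a triple can contain a solution only if $\min A_i+\min B_j+\min C_k\le t\le\max A_i+\max B_j+\max C_k$; the two endpoints of this interval are monotone in $(i,j)$, so for each of the $(m/s)^2$ pairs $(i,j)$ at most $O(1)$ indices $k$ are compatible, leaving $L=O((m/s)^2)$ compatible group-triples (the standard fix for group-triples whose $t$-fibre is unbalanced costs only polylogarithmic factors, absorbed into the $O(\cdot)$; see~\cite{LincolnWWW16}). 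Declaring one sub-instance per compatible group-triple, padded to exactly $s$ elements and sharing the common target $t$, yields the desired batch, with the property that the original instance has a solution iff some sub-instance does. Finally, reducing every number of sub-instance $\ell$ modulo a suitably chosen prime of magnitude $\Theta(s^3\log^2 s)$ --- updating the target and, if necessary, re-shifting one of the three sets so that the common target survives --- shrinks each universe to $O(s^3\log^2 s)$ without ever destroying a genuine solution, producing exactly the batch promised in the statement.

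The whole reduction runs in time $\tilde O(m^2/s)$ (sorting, locating the compatible triples, and emitting and modularly reducing the $\Theta(Ls)=\Theta(m^2/s)$ output numbers), which is dominated by the postulated $O(m^{2-\epsilon})$ running time in every regime in which the hypothesis is not vacuous --- if $Ls=\omega(m^{2-\epsilon})$ the postulated algorithm could not even read its input. Hence, running the postulated algorithm on the batch and returning the OR of its $L$ answers would decide the original 3SUM instance in subquadratic time $O(m^{2-\epsilon'})$ for some $\epsilon'>0$, refuting the 3SUM conjecture.

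The ``if'' half of the correspondence (a real solution survives both the group split and the modular reduction) is immediate, so the step I expect to be the main obstacle is the ``only if'' half: a sub-instance that is genuinely a no-instance over $\mathbb{Z}$ can acquire a \emph{spurious} solution after the modular reduction, which would make the OR report a false ``yes''. Avoiding this while keeping the universe as small as $O(s^3\log^2 s)$ --- rather than the much larger $\Theta(U)$ at which the modular reduction becomes error-free --- and while still juggling as many as $L=O((m/s)^2)$ sub-instances is precisely where the careful parameter choices and error analyses of~\cite{LincolnWWW16,BDP05,Pat10,abboud2014losing} are needed: one picks the modulus as a random prime from a range of size $\Theta(s^3\log^2 s)$ (so that a fixed non-solution triple survives with probability only $O(\log U)/\Theta(s^3\log^2 s)$) and controls the total error by a union bound together with, if necessary, a constant number of independent repetitions and a majority vote, which only multiplies the running time by a constant. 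The remaining bookkeeping --- the group-triple count, the padding, the target unification, and the running-time estimate --- is routine and deferred to the appendix.
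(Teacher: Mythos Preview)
Your plan is the paper's: the sorted-groups self-reduction of~\cite{LincolnWWW16} followed by universe reduction via a random prime. The overall structure is right, but your error-control step has a genuine gap.

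After reducing modulo a prime of size $\Theta(s^3\log^2 s)$, each NO sub-instance acquires a spurious solution with probability as large as $1/2$: your own per-triple bound of $O(\log U)/\Theta(s^3\log^2 s)$, union-bounded over the $s^3$ triples in an instance, yields only an $O(1)$ bound (tightened to $\le 1/2$ by choosing constants). With $L=\Theta((m/s)^2)$ sub-instances and an OR at the end, the overall false-positive probability is then $\Omega(1)$. A constant number of repetitions cannot drive a per-instance error of $\Theta(1)$ below $1/L$, and majority vote in particular is useless here: a one-sided bound of ``$\le 1/2$'' provides no separation in a majority, so even $\omega(1)$ repetitions with majority vote can leave constant error per instance.

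The fix the paper uses exploits the one-sidedness of the error directly: repeat $\gamma\log m$ times with fresh random primes and declare instance $\ell$ a YES only if \emph{all} repetitions say YES. A genuine solution survives every modulus, so YES instances are never lost; a NO instance survives all $\gamma\log m$ rounds with probability at most $2^{-\gamma\log m}=m^{-\gamma}$, and a union bound over $L\le m^2$ instances gives overall error $O(m^{2-\gamma})$, polynomially small for $\gamma>2$. This costs a factor $\Theta(\log m)$ in running time, not a constant, but $O(m^{2-\epsilon}\log m)$ is still $O(m^{2-\epsilon'})$. One further detail you only allude to: after reducing mod $p$ the sum $a+b+c$ ranges over $\{0,\dots,3p-3\}$, so testing $a+b+c\equiv t\pmod p$ requires running the batch algorithm three times, once for each target in $\{t\bmod p,\,(t\bmod p)+p,\,(t\bmod p)+2p\}$; these targets are still common across all~$\ell$.
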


Given the above self-reduction, the basic idea is as follows. We construct
a matrix $M$ whose rows are indexed by the instance $1\le\ell\le L$
and the aim is to construct the row $M_{\ell}$ and the vector $v$
such that $M_{\ell}\cdot v\ge1$ if and only if the instance $A_{\ell,}B_{\ell},C_{\ell}$
contains a solution, i.e., $a\in A_{\ell},b\in B_{\ell},c\in C_{\ell}$
with $a+b+c=t$. Unfortunately, we cannot apply our Vector Inner Product
construction directly: this would encode the set $A_{\ell}+B_{\ell}=\{a+b\mid a\in A_{\ell},b\in B_{\ell}\}$
into the row $M_{\ell}$ and the set $C_{\ell}$ into the vector $v$
-- however, in the matrix product $Mv$, each row $M_{\ell}$ is multiplied
with a \emph{fixed} vector $v$, while the $C_{\ell}$'s differ for
each $\ell$. We overcome this issue by adapting our construction
to encode the set $A_{\ell}+B_{\ell}+C_{\ell}=\{a+b+c\mid a\in A_{\ell},b\in B_{\ell},c\in C_{\ell}\}$
into the row $M_{\ell}$, and only the common target $t$ into $v$.
As all instances use the same target $t$, this is indeed possible.

Specifically, using the ideas of Theorem~\ref{thm:vectorInnerProduct}, which produces a $2sU$-dimensional vectors encoding the sets $A+B$ and $C$, both having compressed size $O(s\log U)$, we show how to produce $3s^2 U$-dimensional vectors $M_\ell$ and $v$ encoding the sets $A_\ell+B_\ell+C_\ell$ and $\{t\}$, both having compressed size $O(s\log U)$. This yields a $(L \times 3s^2 U)$-dimensional matrix $M$ and $3s^2 U$-dimensional vector $v$. There is a choice $s=\Theta(m^{2/7})$ that leads to a quadratic matrix $M$ with dimension $N=\Theta(m^{10/7})$ (as it has $O((m/s)^2)=O(m^{10/7})$ rows and $O(s^2U)=O(s^5)=O(m^{10/7})$ columns), with row compressions of size $n=\Theta(s\log s) = \Theta(m^{2/7}\log m)\approx N^{1/5}$. Thus, any $O(Nn^{2-\eps})$ algorithm computing $M\cdot v$ would solve 3SUM instances in time $\tilde{O}(m^{2-2\eps/7})$, refuting the 3SUM conjecture.


\section{Matrix-Matrix Multiplication}
\label{sec:MM}

In this section, we consider the problem of computing the matrix product
$C$ of two $N\times N$ matrices $A,B$. We consider the following
representations of the input matrices:
\begin{itemize}
\item \textbf{Convenient compression: }$A$ is compressed row-wise, $B$
is compressed column-wise. This representation allows us to compute
any single entry $C_{i,j}$ by running an inner product algorithm
on the compressed row $A_{i}$ and the compressed column $B_{j}$.
The size of the input is $O(N\bar{n}_{\mathrm{in}})$, where
$\bar{n}_{\mathrm{in}}$ is the maximum compressed size of the rows $A_{i}$ and columns $B_{j}$.
\item \textbf{Strong compression: } For any matrix $M$, we define \emph{strong compression} as a grammar compression of $M$ or $M^T$ when viewed as $n^2$-dimensional vector, whichever is shortest. When both $A,B$ are given as strong compression, the resulting representation can have a much smaller size (it can be $o(N)$),
but to compute a single entry $C_{i,j}$, we first might need to obtain
a representation of the row $A_{i}$ and the column $B_{j}$.
\end{itemize}
Similarly, we have several options for representing $C$:
\begin{itemize}
\item \textbf{Row-wise compression of $C$. }This compression is particularly
useful if we aim to compute repeated matrix products $A_{1}(A_{2}(\cdots(A_{k}B)))$.
The output size is $O(N\bar{n}_{\mathrm{out}})$, where $\bar{n}_{\mathrm{out}}$
is the maximum compressed size over all rows of $C$.
\item \textbf{Column-wise compression of $C$. }This compression is particularly
useful if we aim to compute repeated matrix products $(((AB_{1})B_{2})\cdots)B_{k}$.
The output size is $O(N\bar{n}_{\mathrm{out}})$, where $\bar{n}_{\mathrm{out}}$
is the maximum compressed size over all columns of $C$.
\item \textbf{Strong compression of $C$. }This compression has the smallest
output size, which can be even $o(N)$.
\end{itemize}
We show the following result:
\begin{theorem}
\label{thm:MMoutputLB}For infinitely many $N,$ there are $N\times N$
matrices $A,B$ with 
\begin{enumerate}
\item convenient compression of size $O(N\log N)$ (already under RLE),
and
\item strong compression of size $O(\log^{2}N)$, such that
\item the matrix product $C=AB$ has size $\Omega(N^{2}/\log^{2}N)$ in
any grammar-compression (row-wise, column-wise, or strong).
\end{enumerate}
\end{theorem}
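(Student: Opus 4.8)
The plan is to take $A,B$ so that $C = AB$ is the $N\times N$ integer \emph{multiplication table} $C_{ij}=ij$ (for $N$ a power of two, $i,j\in\{0,\dots,N-1\}$), a matrix that is grammar-incompressible in every reading order, while choosing $A,B$ so that they are nevertheless strongly compressible. The nontrivial point, and the crux of the argument, is that $[ij]$ factors into two matrices that have $O(\log^2 N)$-size grammars: writing the binary expansions $i=\sum_{l<\log N}2^l i_l$ and $j=\sum_{m<\log N}2^m j_m$, we have $ij=\sum_{l,m}2^l i_l\cdot 2^m j_m$, which is exactly the inner product of the vector $(2^l i_l)_{(l,m)}$ with the vector $(2^m j_m)_{(l,m)}$ over the $(\log N)^2$ coordinates indexed by pairs $(l,m)$.

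Concretely, I would let $A$ be the $N\times N$ matrix whose column indexed by $(l,m)\in\{0,\dots,\log N-1\}^2$ (ordered with $l$ as the major index) is the vector $\gamma_l:=(2^l i_l)_{i<N}$, with the remaining $N-(\log N)^2$ columns equal to zero; symmetrically, $B$ has row $(l,m)$ equal to $(2^m j_m)_{j<N}$, remaining rows zero. Then $(AB)_{ij}=\sum_{l,m}2^l i_l\cdot 2^m j_m=ij$, so $C=[ij]$. For compression: each $\gamma_l$ is the periodic string over $\{0,2^l\}$ of period $2^{l+1}$ whose first period is $0^{2^l}$ followed by $2^l$ copies of the symbol $2^l$, hence has an SLP of size $O(\log N)$ by Proposition~\ref{prop:SLPrepetition}; since column $(l,m)$ of $A$ equals $\gamma_l$ for every $m$, the column-major string of $A$ (i.e.\ $A^T$ read as a vector) is $\gamma_0^{\log N}\gamma_1^{\log N}\cdots\gamma_{\log N-1}^{\log N}$ followed by one long zero block, which has an SLP of size $O(\log^2 N)$ — a strong compression of $A$ of the claimed size; the same holds for $B$ with rows/columns and $l/m$ swapped. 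For the convenient (RLE) compression: a row of $A$ is $\log N$ constant blocks (the $l$-th has the single value $2^l i_l$) followed by zeros, i.e.\ $O(\log N)$ runs, so $O(N\log N)$ runs over all rows of $A$, and symmetrically for the columns of $B$.

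To see that $C=[ij]$ is incompressible I would use the standard counting bound: an SLP of size $g$ generating a string $s$ of length $L$ has at most $g\ell$ distinct length-$\ell$ substrings, since every length-$\ell$ factor is determined by the lowest parse-tree node whose yield contains it together with the offset, inside the factor, of the split point of that node's production — at most $g$ productions times $\ell$ offsets. Thus $g\ge\max_\ell D_\ell(s)/\ell$. Reading $C$ in row-major order, the window $(C_{i,j},C_{i,j+1},C_{i,j+2})=(ij,\,ij+i,\,ij+2i)$ with $i\ge1$ determines $i=C_{i,j+1}-C_{i,j}$ and then $j=C_{i,j}/i$, so all $(1-o(1))N^2$ positions with $i\ge1$ give distinct length-$3$ windows; hence $D_3\ge(1-o(1))N^2$ and every SLP for the row-major string of $C$ has size $\Omega(N^2)$. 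Restricting this to a single row $i\ge1$ shows its SLP has size $\Omega(N)$, so the row-wise compression of $C$ has total size $\Omega(N^2)$; the column-major and column-wise cases are identical by the symmetry $C=C^T$. (To get the statement for $\{0,1\}$-matrices one writes all entries in binary and argues with windows of length $\Theta(\log N)$, at the cost of logarithmic factors; this still beats $\Omega(N^2/\log^2 N)$.)

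The step I expect to be the real obstacle is believing, and making precise, that such a factorization exists at all: since $A,B$ have $O(\log^2 N)$-size grammars, $C=AB$ has Kolmogorov complexity $O(\mathrm{polylog}\,N)$, which makes it counterintuitive that $C$ should require a grammar of size $\Omega(N^2/\log^2 N)$. The resolution — that grammar compression is much weaker than Kolmogorov complexity, with the multiplication table the cleanest witness of the gap — is precisely what must be exploited, and it forces the shape of $A$ and $B$: each must depend on its index only ``digit by digit'', so that it has only $\log N$ distinct columns (rows) and Proposition~\ref{prop:SLPrepetition} applies. Once the construction is in hand, the compression verifications and the distinct-substring lower bound are routine, and the theorem follows for all $N$ that are powers of two.
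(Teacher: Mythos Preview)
Your approach is genuinely different from the paper's and mostly correct. The paper builds $\{0,1\}$-matrices $A',B'$ so that every row of $C'=A'B'$ has the form $(x\mid y)$ for all $x,y\in\{0,1\}^\ell$, hence $C'$ read row-wise contains \emph{all} binary strings of length $2\ell$; Lemma~\ref{lem:incompressibility} then gives grammar size $\ge 2^{2\ell}/(2\ell)=\Omega(N^2/\log^2 N)$. Your multiplication-table route is more algebraic and, over its integer alphabet, sharper: because length-$3$ windows $(ij,ij+i,ij+2i)$ already determine $(i,j)$, you get $\Omega(N^2)$ directly. What the paper buys is the Boolean setting; what you buy is a cleaner bound and a conceptually transparent witness of the gap between grammar size and Kolmogorov complexity. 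Your parenthetical about recovering the $\{0,1\}$ case by ``writing entries in binary'' is too quick, however: replacing each entry by its bit-expansion destroys the identity $C=AB$, and engineering binary $A,B$ whose \emph{matrix product} hits every binary string is essentially what drives the paper's more delicate construction.

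There is one genuine (if small) gap in claim~1. Your ``symmetrically for the columns of $B$'' does not hold with the $l$-major ordering you fixed: column~$j$ of $B$ reads $(2^0 j_0,\,2^1 j_1,\dots,2^{\log N-1} j_{\log N-1})$ repeated $\log N$ times and then zeros, which for generic $j$ (say $j=N-1$) has $\Theta(\log^2 N)$ RLE runs, not $O(\log N)$. In fact no ordering of the $(\log N)^2$ inner indices can make both coordinate sequences $(l(k))_k$ and $(m(k))_k$ have $O(\log N)$ runs while realising every pair $(l,m)$ exactly once: the pair sequence would then have at most $O(\log N)$ constant blocks, but it must visit $(\log N)^2$ distinct values. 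So within your construction, convenient compression under RLE is $O(N\log^2 N)$, one $\log$ factor short of the stated bound. (Each column of $B$ does have an $O(\log N)$ \emph{grammar}, so claim~1 holds without the ``already under RLE'' rider.) This does not affect claims~2 and~3, and the qualitative content of the theorem survives, but the precise statement as written is not met by your matrices.
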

As a consequence, there can be no $o(N^{2}/\log^2 N)$ algorithm for
matrix-matrix multiplication (for any of our discussed representations),
since already writing the output requires time $\Omega(N^{2}/\log^{2}N)$.

The rough proof strategy is to construct an instance $C=AB$ such that $C$ and $C^{T}$,
when viewed as $N^{2}$-dimensional vectors, contain all substrings of
length $2\log_{2}n$. By the following standard lemma, such a string
has no grammar compression of size $o(N^{2}/\log N)$.
\begin{lemma}[{see, e.g., \cite[Lemma 3]{Char+05}}]
\label{lem:incompressibility}Let $\ell\in\mathbb{N}$. If a string
$x$ is generated by a grammar of size~$n$, then $x$ contains at
most $n\ell$ distinct substrings of length $\ell$.\end{lemma}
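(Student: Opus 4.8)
The plan is to use the standard ``every factor of the generated string must straddle the split of some production'' argument. First I would fix notation. The SLPs in this paper are already in binary form: every rule is either $S\to a$ with $a\in\Sigma$, or $S\to YZ$ with $Y,Z$ previously defined non-terminals (a general context-free grammar generating a single string can be put in this form with only a constant-factor increase in size); in particular a grammar of size $n$ has at most $n$ binary rules. Write $\mathrm{exp}(S)$ for the string generated by non-terminal $S$, let $S_{\mathrm{start}}$ be the start symbol, so $\mathrm{exp}(S_{\mathrm{start}})=x$, and let $T$ be the derivation (parse) tree of $x$: each internal node is labelled by a non-terminal and has its two children labelled by the right-hand side of that non-terminal's rule, and the $|x|$ leaves spell out $x$ from left to right. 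The case $\ell=1$ is trivial, since $x$ uses at most $n$ distinct symbols (one per terminal rule), so I assume $\ell\ge 2$.

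Next, to every \emph{distinct} length-$\ell$ substring $w$ of $x$ I would assign one occurrence, i.e.\ a block of $\ell$ consecutive leaves of $T$, and let $v_w$ be their lowest common ancestor. Since these are $\ell\ge 2$ distinct leaves, $v_w$ is an internal node; let $S_w\to Y_wZ_w$ be its rule. The leaves in the subtree rooted at the $Y_w$-child of $v_w$ all precede, in left-to-right order, those in the subtree rooted at the $Z_w$-child, and by minimality of the lowest common ancestor the block of $\ell$ leaves meets both subtrees. Hence the block decomposes as a nonempty \emph{suffix} of the leaf sequence under $Y_w$, followed by a nonempty \emph{prefix} of the leaf sequence under $Z_w$; letting $j_w\in\{1,\dots,\ell-1\}$ be the number of leaves taken from the $Y_w$-side, reading off the labels shows that $w$ is exactly the last $j_w$ symbols of $\mathrm{exp}(Y_w)$ concatenated with the first $\ell-j_w$ symbols of $\mathrm{exp}(Z_w)$.

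The crux is that the map $w\mapsto (S_w,j_w)$ is injective. Suppose $w$ and $w'$ are length-$\ell$ substrings of $x$ with $S_w=S_{w'}$ and $j_w=j_{w'}$. Because the grammar is deterministic there is a \emph{single} rule $S_w\to Y_wZ_w$ for this non-terminal, so by the previous paragraph both $w$ and $w'$ are forced to equal the last $j_w$ symbols of $\mathrm{exp}(Y_w)$ followed by the first $\ell-j_w$ symbols of $\mathrm{exp}(Z_w)$; hence $w=w'$. Since there are at most $n$ binary non-terminals and $j$ ranges over the $\ell-1$ values $\{1,\dots,\ell-1\}$, the number of distinct length-$\ell$ substrings of $x$ is at most $n(\ell-1)<n\ell$, as claimed.

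I do not expect a genuine obstacle here; the two points that need care are (i) the claim that the lowest common ancestor of the $\ell$-leaf block splits it into a nonempty prefix/suffix across that node's two children — this is precisely where minimality of the ancestor is used — and (ii) the injectivity step, which relies on each non-terminal having a unique production. Two remarks worth recording in the write-up: the bound counts distinct substrings irrespective of multiplicity, and it is unaffected by non-terminals unreachable from $S_{\mathrm{start}}$, which simply never appear in $T$ and thus only help the count.
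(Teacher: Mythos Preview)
Your argument is correct and is precisely the standard proof (the one given in \cite[Lemma~3]{Char+05}): associate to each distinct length-$\ell$ factor an occurrence, take the lowest common ancestor in the parse tree, and observe that the pair (non-terminal, split position) determines the factor. Note that the paper does not supply its own proof of this lemma at all; it simply cites it as a known fact, so there is nothing further to compare against.
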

\begin{proof}[Proof of Theorem~\ref{thm:MMoutputLB}]
Let $\ell\in\mathbb{N}$.
 We first define the matrices $A',B'$ where $A'$ is a $(2^{\ell}\times2\ell)$
matrix with rows indexed by strings $x\in\{0,1\}^{\ell}$ in lexicographic
order, and $B'$ is a $(2\ell\times2^{\ell}(2\ell))$ matrix with
columns indexed by $(y,k)\in\{0,1\}^{\ell}\times\{1,\dots,2\ell\}$
in lexicographic order. For arbitrary $z\in\{0,1\}^{\ell}$, let $\diag(z)$
denote the $\ell\times\ell$ diagonal matrix with $z$ on the diagonal.
We define
\begin{align*}
A'_{x} & \coloneqq(x\mid1^{\ell}), & B'_{(y,1),\dots,(y,2\ell)} & \coloneqq\left(\begin{array}{c|c}
\diag(1^{\ell}) & 0\\
\hline 0 & \diag(y)
\end{array}\right).
\end{align*}

Let $C'=A'B'$ be the $(2^{\ell}\times2^{\ell}(2\ell))$ product matrix
of $A'$ and $B'$, with rows and columns indexed by $\{0,1\}^{\ell}$
and $\{0,1\}^{\ell}\times\{1,\dots,2\ell\}$, respectively. Observe
that by definition, $(C_{x,(y,1)},\dots,C_{x,(y,2\ell)})=(x\mid y)$
for any $x,y\in\{0,1\}^{\ell}$. In particular, when we view $C'$
as a $2^{2\ell}(2\ell)$-length string, it contains all strings in
$\{0,1\}^{2\ell}$ as substrings, thus by Lemma~\ref{lem:incompressibility},
any row-wise compression is of size at least $2^{2\ell}/(2\ell)$.

It is straightforward to make these matrices quadratic with dimension $N=\Theta(\ell2^\ell)$ (by introducing all-$0$ columns) and to ensure that also column-wise compression has size $\Omega(2^{2\ell}/\ell)=\Omega(N^2/\log^2 N)$ (using 
transposed constructions to $A'$ and $B'$). Finally, we can compress each row of $A'$ and column of $B'$ trivially to length $O(\ell)=O(\log N)$ (already using RLE). In the appendix, we also argue how to grammar-compress the concatenation of the columns of $A'$ and the rows of $B'$ to size $O(\ell^2)=O(\log^2 N)$, which concludes the desired bound on the strong compression.
\end{proof}

\section*{Broader Impact}

The broader impact of our work is to inform algorithm design for compressed linear algebra, which can lead to
faster algorithms for a variety of tasks on large data sets. The ethical consequences depend on the specific application. We do not see
any inherently new concerns raised by our results, beyond those that follow generally from faster algorithms and an increased ability to process data.



\bibliographystyle{plain}
\bibliography{refs.bib}

\appendix 

\section{Further Preliminaries}

For a sequence of vectors $v_1,\dots, v_\ell$, we let $v_1\,v_2\,\dots\,v_\ell = v_1\circ v_2 \circ \cdots \circ v_\ell = \bigcirc_{i=1}^\ell v_i$ denote their concatenation.

By the following observation, when proving a lower bound for a compression of size $\Theta(N^\gamma)$, the main task is to prove the upper bound $n = O(N^\gamma)$; the lower bound $n=\Omega(N^\gamma)$ can be ensured mechanically.
\begin{observation}\label{obs:padn}
Let $0 \le \gamma \le 1$. Given two $N$-dimensional vectors $u,v$ of compressed size $O(N^\gamma)$, we can compute two $O(N)$-dimensional vectors $u',v'$ of compressed size $\Theta(N^\gamma)$ with the same inner product.
\end{observation}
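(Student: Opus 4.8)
The plan is to keep the substantive part of $u,v$ intact and append two short \emph{incompressibility gadgets}, one to each vector, placed in disjoint coordinate ranges that are zeroed out in the other vector; this preserves the inner product exactly while forcing the grammar size up to the desired $\Theta(N^\gamma)$. The whole difficulty is to choose a gadget string $w$ that is at the same time (i) \emph{incompressible enough} to guarantee the lower bound $\Omega(N^\gamma)$ on the grammar size of the padded vectors, (ii) \emph{compressible enough} that it does not push their grammar size above $O(N^\gamma)$, and (iii) \emph{short enough} ($|w| = O(N)$) to keep the dimension linear.

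Concretely, I would fix $\ell = \ell(N)$ with $2^\ell = \Theta(N^\gamma \log N)$, so $\ell = \Theta(\log N)$, and let $w \in \{0,1\}^{2^\ell + \ell - 1}$ be a de~Bruijn sequence of order $\ell$ with its first $\ell-1$ symbols appended at the end, so that $w$ has length $\Theta(N^\gamma \log N)$ and contains \emph{every} string of $\{0,1\}^\ell$ as a contiguous substring. Then set, writing juxtaposition for concatenation,
\[
 u' \coloneqq u\, w\, 0^{|w|}, \qquad v' \coloneqq v\, 0^{|w|}\, w .
\]
Both vectors have dimension $N + 2|w| = N + \Theta(N^\gamma \log N) = O(N)$ (for $\gamma$ bounded away from $1$, which covers every application in this paper), and since in each of the two products $w$ is aligned against a block of zeros, $u' \cdot v' = u \cdot v$.

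To finish I would check that the grammar size of $u'$ (and symmetrically $v'$) is exactly $\Theta(N^\gamma)$. \emph{Upper bound}: concatenate the given $O(N^\gamma)$-size grammar for $u$, a grammar of size $O(|w|/\log|w|) = O(N^\gamma)$ for $w$ (every binary string of length $m$ admits a grammar of size $O(m/\log m)$ --- this is folklore, e.g.\ via the LZ78 parse, and is computable in linear time), and an $O(\log N)$-size grammar for $0^{|w|}$ from Proposition~\ref{prop:SLPrepetition}, joined by $O(1)$ further rules. \emph{Lower bound}: $u'$ contains $w$ as a substring, hence contains all $2^\ell$ strings of length $\ell$, so by Lemma~\ref{lem:incompressibility} any grammar for $u'$ has size at least $2^\ell/\ell = \Theta(N^\gamma)$. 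Constructing the de~Bruijn sequence (doable in time linear in its length) and the grammars costs $O(N^\gamma \log N) = O(N)$ time, completing the reduction.

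I expect the only real subtlety to be balancing requirements (i) and (ii): a gadget that is too long --- say the concatenation of all $2^\ell$ strings of length $\ell$, which has length $\ell 2^\ell$ --- drags the grammar size up by a spurious $\log N$ factor, while a ``too structured'' gadget may be compressible below $\Omega(N^\gamma)$ and fail (i). A de~Bruijn sequence of logarithmic order sits precisely at the boundary: it packs the maximal $2^\ell$ distinct $\ell$-grams into length only $2^\ell$, yet still enjoys the generic $O(2^\ell/\ell)$-size grammar. This is also why the statement is meaningful only for $\gamma < 1$: no string of length $O(N)$ has grammar complexity $\omega(N/\log N)$, so achieving $\Theta(N^\gamma)$ for $\gamma$ close to $1$ with an $O(N)$-dimensional padding would be impossible.
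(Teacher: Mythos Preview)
Your construction is correct for $\gamma$ bounded away from $0$ and $1$, and in particular for every application in the paper, but it is far more elaborate than needed. The paper's proof is a single line: append $0^{N^\gamma}$ to both $u$ and $v$, encoded \emph{inefficiently} using $\Theta(N^\gamma)$ additional rules (e.g.\ $Z_1\to 0,\ Z_2\to Z_1\,0,\ \dots,\ Z_{\lceil N^\gamma\rceil}\to Z_{\lceil N^\gamma\rceil-1}\,0$). The appended blocks are zero in both vectors, so the inner product is unchanged, the dimension grows by $N^\gamma=O(N)$, and the grammar size becomes $\Theta(N^\gamma)$ by construction.

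The conceptual point you missed is that ``compressed size $n$'' in this paper refers to the size of the \emph{given} grammar representation, not to the minimum grammar size of the underlying string. The lower bounds are against algorithms that receive a grammar of size $n$ as input; nothing prevents us from handing such an algorithm a deliberately wasteful encoding. Your de~Bruijn gadget achieves the stronger property that even the \emph{smallest} grammar for $u',v'$ has size $\Theta(N^\gamma)$, which is nice but unnecessary here --- and it costs you the clean coverage of the endpoints $\gamma\in\{0,1\}$ as well as the extra machinery (de~Bruijn sequences, the $O(m/\log m)$ grammar bound, Lemma~\ref{lem:incompressibility}). The paper's padding trick sidesteps all of this.
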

\begin{proof}
Append $0^{N^\gamma}$ using $\Theta(N^\gamma)$ additional rules to the encodings of $u$ and $v$.
\end{proof}

\paragraph{The Strong $k$SUM Assumption}
To generalize the lower bound of Theorem~\ref{thm:vectorInnerProduct} so that it works for an arbitrary relationship between compressed and uncompressed sizes, we will use an assumption about a generalized version of $3$SUM.

\begin{definition}[The $k$SUM Problem]
Given $k$ sets $A_1,\ldots,A_k$ of $m$ integers in $\{1,\ldots, U\}$, decide if there are $k$ numbers $a_1 \in A_1, \ldots, a_k \in A_k$ such that $a_1+\cdots+a_{k-1}=a_k$.
\end{definition}

For all constant $k \geq 3$ a simple meet-in-the-middle algorithm with hashing solves $k$SUM in $O(m^{\lceil k/2 \rceil})$ time, and no faster algorithm by $m^{\eps}$ factors, for any $\eps>0$, is known to date, unless the universe size $U$ is smaller than $O(m^{\lceil k/2 \rceil - \eps})$. This is because Fast Fourier Transform gives an $O(m+kU \log{U})$ time algorithm \cite{CLRSbook}.
It is conjectured that substantially faster algorithms do not exist (e.g. in \cite{abboud2013exact,ABBK17}).

\begin{ksumconj}
For all constant $k\geq 3$ it holds that: no algorithm can solve the $k$SUM problem with $U=O(m^{\lceil k/2 \rceil})$ in $O(m^{\lceil k/2 \rceil-\eps})$ time, where $\eps>0$.
\end{ksumconj}

Observe that this assumption is about all $k \geq 3$ and therefore implies the Strong 3SUM conjecture as a special case. Intuitively, the reason this problem helps us give reductions where the vectors are much more compressible is that, compared to 3SUM, as $k$ grows the ratio between the time complexity $m^{k/2}$ and the input size $m$ grows.

\section{Vector Inner Product}

In this section, we prove the generalization of the lower bound of Theorem~\ref{thm:vectorInnerProduct} to arbitrary relationships between compressed and uncompressed sizes of the vectors.

\begin{theorem}\label{thm:VVarbitraryn}
Let $0 < \eps < 1/3$. 
Assuming the Strong $k$SUM conjecture for all constant $k$, the inner product of two $N$-dimensional vectors that are grammar-compressed to size $n=\Theta(N^{\eps})$ cannot be computed in $O(N^{1/3-\delta})$ time, where $\delta > 0$.
\end{theorem}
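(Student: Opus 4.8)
The plan is to prove a parametrized reduction from $k$SUM to compressed inner product whose blow-up from uncompressed to compressed size grows with $k$, and then to tune $k$ to the target exponent. Concretely, I will show: \emph{given a $k$SUM instance $A_{1},\dots,A_{k}$ of $m$ integers in $\{1,\dots,U\}$, one can construct in time $O(m\log U)$ two vectors $v,w\in\{0,1\}^{N}$ with $N=\Theta(m^{k-2}U)$, each grammar-compressed to size $O(m\log U)$, such that $v\cdot w\ge 1$ iff the $k$SUM instance has a solution.} Granting this, the theorem follows from the arithmetic already sketched after Theorem~\ref{thm:vectorInnerProduct}: under the Strong $k$SUM conjecture we may take $U=\Theta(m^{\lceil k/2\rceil})$, so $N=\Theta(m^{\,k-2+\lceil k/2\rceil})$ and, up to a polylogarithmic factor, $n=\Theta(m)=\Theta(N^{1/(k-2+\lceil k/2\rceil)})$. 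Given $0<\eps<1/3$, choose a constant integer $k$ with $k-2+\lceil k/2\rceil>1/\eps$; then $n=O(N^{\eps})$, and Observation~\ref{obs:padn} lets us pad up to compressed size exactly $\Theta(N^{\eps})$ without changing the inner product. An $O(N^{1/3-\delta})$-time inner-product algorithm would then decide the $k$SUM instance in time $O(N^{1/3-\delta}+m\log U)=O(m^{(k-2+\lceil k/2\rceil)(1/3-\delta)}+m\log U)=O(m^{\lceil k/2\rceil-\eps'})$ for some $\eps'>0$, using that $(k-2+\lceil k/2\rceil)/3<\lceil k/2\rceil$ for every $k\ge 3$ and that $\lceil k/2\rceil\ge 2$; this refutes the Strong $k$SUM conjecture.

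The heart of the argument is the vector construction, which nests the two-set gadget of Theorem~\ref{thm:vectorInnerProduct} $k-2$ times. Put $L:=(k-1)U$, and for $X\subseteq\{1,\dots,L\}$ let $v_{X}\in\{0,1\}^{L}$ be the characteristic vector, which compresses to size $O(|X|\log L)$ exactly as in the proof of Theorem~\ref{thm:vectorInnerProduct}. The innermost object carries $A_{k-1}$ together with a shift indexed by $A_{k-2}$:
\[
R_{k-2}\;:=\;\bigcirc_{a\in A_{k-2}}\bigl(\,0^{a}\;v_{A_{k-1}}\;0^{\,L-U-a}\,\bigr),
\]
a concatenation of $m$ length-$L$ ``windows'', the window for $a$ holding $v_{A_{k-1}}$ shifted right by $a$. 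For $j=k-3,k-4,\dots,1$ we set $R_{j}:=\bigcirc_{a\in A_{j}}\bigl(\,0^{a}\;R_{j+1}\;0^{\,U-a}\,\bigr)$ and finally $v:=R_{1}$. The mirror vector uses an \emph{unshifted} copy of $A_{k}$ and plain right-padding in the outer levels: $R^{w}_{k-2}:=(\,v_{A_{k}}\;0^{\,L-U}\,)^{m}$ and $R^{w}_{j}:=\bigl(\,R^{w}_{j+1}\;0^{\,U}\,\bigr)^{m}$ for $j<k-2$, with $w:=R^{w}_{1}$. One checks $|R_{j}|=|R^{w}_{j}|$ at every level, so $|v|=|w|=N=\Theta(m^{k-2}U)$.

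Correctness rests on a ``cascade'' invariant, proved by downward induction on $j$: each of $R_{j}$ and $R^{w}_{j}$ is a fixed sequence (depending only on $j$, not on the values $a_{\bullet}$) of length-$L$ windows and length-$U$ zero-blocks, where the window indexed by $(a_{j},\dots,a_{k-2})$ equals $v_{A_{k-1}}$ shifted right by $a_{j}+\dots+a_{k-2}$ in $R_{j}$ and equals the unshifted $v_{A_{k}}0^{L-U}$ in $R^{w}_{j}$, and where $R_{j}$ and $R^{w}_{j}$ share this block sequence. The inductive step observes that prepending $0^{a}$ and appending $0^{U-a}$ to $R_{j+1}$ produces the same block sequence with one $0^{U}$-block appended and the content of every window shifted right by $a$: the $a$ extra leading zeros ``cascade through'' the concatenation, each block of $R_{j+1}$ ending in at least $U\ge a$ zeros to absorb the overflow --- for a window this trailing-zero count is $L-U-(a_{j+1}+\dots+a_{k-2})\ge jU\ge U$ since each summand is at most $U$ --- and the final overflow combines with the appended $0^{U-a}$ to form the new $0^{U}$-block; in $R^{w}_{j}$ the appended $0^{U}$ has exactly this effect with no shift. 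Consequently $v$ and $w$ have the same block sequence, windows paired with windows, so $v\cdot w=\sum_{(a_{1},\dots,a_{k-2})}\bigl(0^{\sigma}v_{A_{k-1}}0^{L-U-\sigma}\bigr)\cdot\bigl(v_{A_{k}}0^{L-U}\bigr)$ with $\sigma=a_{1}+\dots+a_{k-2}$, and a coordinate $p$ contributes iff $p-\sigma\in A_{k-1}$, $p\in A_{k}$ and $p\le U$, i.e.\ iff $a_{1}+\dots+a_{k-1}=a_{k}$ solves the $k$SUM instance (the constraint $p\le U$ is automatic for a genuine solution since $a_{k}\le U$). For the size bound we reuse non-terminals aggressively: a single symbol generating $v_{A_{k-1}}$, of size $O(m\log U)$, is shared by all $m$ windows of $R_{k-2}$, each window adding only two zero-runs of size $O(\log U)$ by Proposition~\ref{prop:SLPrepetition}; each level-$j$ step then reuses the start symbol of $R_{j+1}$ and adds $m$ zero-runs, i.e.\ $O(m\log U)$ more symbols, so $|v|=O(k\,m\log U)=O(m\log U)$ for constant $k$; the vector $w$, being a tower of repetitions, compresses to $O(m\log U)$ by Proposition~\ref{prop:SLPrepetition} as well, and every grammar is produced in time linear in its size.

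The step I expect to be the main obstacle is making this cascade invariant fully rigorous: one must track block positions and trailing-zero counts through all $k-2$ levels of nesting and verify that (i) no $1$-entry of any $v_{A_{k-1}}$ occurrence is ever overwritten or pushed across a block boundary, and (ii) the block sequences of $v$ and $w$ coincide, so that the only surviving contributions to $v\cdot w$ are the ``diagonal'' pairings corresponding to $k$SUM tuples. The remaining ingredients --- the choice of $k$, the universe-size bookkeeping under the Strong $k$SUM conjecture, and the compression estimates --- are routine given Theorem~\ref{thm:vectorInnerProduct}, Proposition~\ref{prop:SLPrepetition}, and Observation~\ref{obs:padn}.
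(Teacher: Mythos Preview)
Your proposal is correct and follows the same strategy as the paper's Lemma~\ref{lem:generalk-reduction}: encode all shifts of $v_{A_{k-1}}$ by tuples in $A_1\times\dots\times A_{k-2}$ into one vector and $m^{k-2}$ aligned copies of $v_{A_k}$ into the other, reuse a single non-terminal for the inner block across the $k-2$ nesting levels to get $O(m\log U)$ compression, and then choose $k$ large enough that $n=O(N^{\eps})$ while $(k-2+\lceil k/2\rceil)/3<\lceil k/2\rceil$ still yields an $\tilde{\Omega}(N^{1/3})$ lower bound. Your base-level padding to window length $L=(k-1)U$ together with the explicit cascade invariant is in fact a cleaner realization than the paper's presentation, which asserts the recurrence $v'_{A_i+\dots+A_{k-1}} = \bigcirc_j 0^{a_j^{(i)}}\, v'_{A_{i+1}+\dots+A_{k-1}}\, 0^{U-a_j^{(i)}}$ for its direct block-wise definition---an identity that fails a simple length check for $i<k-2$---so your version makes the correctness and compression arguments apply to the same string without that glitch.
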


This result follows from the following stronger statement.

\begin{theorem}\label{thm:vectorInnerProduct-generalk}
Let $k\ge 3$. 
Assuming the Strong $k$SUM conjecture, the inner product of two $N$-dimensional vectors that are grammar-compressed to size $n=\Theta(N^{1/\lceil \frac{3k-4}{2}\rceil})$ cannot be computed in $O(N^{(1/3+\gamma_k) - \delta})$ time, where $\delta > 0$ and
    \[ \gamma_k \coloneqq \begin{cases} \frac{2}{3(k-1)}, & \text{ if $k$ is odd,}\\ \frac{4}{9k-12}, & \text{ if $k$ is even.} \end{cases}
    \]
\end{theorem}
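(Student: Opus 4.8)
The plan is to generalize the $k$SUM-to-inner-product reduction sketched after Theorem~\ref{thm:vectorInnerProduct}, but to optimize the trade-off by first applying a self-reduction that reshapes the $k$SUM instance into many smaller $k'$SUM-like instances before encoding. Recall that the basic construction reduces a single $k$SUM instance with $k$ sets of $m$ integers in $\{1,\dots,U\}$, with $U=\Theta(m^{\lceil k/2\rceil})$ under the Strong $k$SUM conjecture, to two $N$-dimensional vectors with $N=\Theta(m^{k-2}U)=\Theta(m^{k-2+\lceil k/2\rceil})$ and compressed size $n=O(m\log U)=\tilde O(m)$; here one vector encodes, for each of the $m^{k-2}$ tuples from $A_1,\dots,A_{k-2}$, a shift of $A_{k-1}$, and the other vector encodes $m^{k-2}$ repetitions of $A_k$. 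This already gives a conditional lower bound of roughly $\tilde\Omega(m^{\lceil k/2\rceil})$ in terms of $n\approx m$, i.e. $\tilde\Omega(n^{\lceil k/2\rceil})$; re-expressed in terms of $N$ it is $\tilde\Omega(N^{\lceil k/2\rceil/(k-2+\lceil k/2\rceil)})$, which already exceeds $N^{1/3}$ for every $k\ge 3$. The job is to push the exponent up to $1/3+\gamma_k$ while keeping $n=\Theta(N^{1/\lceil(3k-4)/2\rceil})$, and the numbers $\lceil(3k-4)/2\rceil$ and $\gamma_k$ in the statement are exactly what this optimized balancing yields.

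First I would state and prove a $k$SUM self-reduction analogous to Lemma~\ref{lem:selfreduction}: for a parameter $1\le s\le m$, a $k$SUM instance on $m$ integers reduces to $L=O((m/s)^{k-?})$ instances each on $s$ integers in a polynomially smaller universe, solvable in total target time $O(m^{\lceil k/2\rceil-\eps})$ only if the Strong $k$SUM conjecture fails; the universe-size reduction (taking everything mod a random prime, then shifting) keeps each sub-instance's universe at $\Theta(s^{\lceil k/2\rceil}\log^2 s)$. Second, I would encode the \emph{batch} of all $L$ sub-instances into a single pair of vectors: concatenate, over the $L$ sub-instances $\ell$, the vector-pair coming from the basic construction applied to instance $\ell$, so that the global inner product is at least $1$ iff some sub-instance has a solution. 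Crucially, by Proposition~\ref{prop:SLPrepetition} and by reusing non-terminals across blocks — exactly the trick that made the single-instance compression $O(m\log m)$ rather than $O(m^2\log m)$ — the total compressed size stays $\tilde O(s + \text{(number of distinct sets across sub-instances)})$, which one arranges to be $\tilde O(s)$ by having the self-reduction produce sub-instances whose sets are cheap to describe. Third, I would choose $s=s(m)$ to equate the two competing quantities — the uncompressed dimension $N$ (a product of the per-instance dimension $s^{k-2+\lceil k/2\rceil}$ and the instance count $L$, plus padding) and the target runtime — and simultaneously make the instance shape ``square'' so that $n=\Theta(N^{1/\lceil(3k-4)/2\rceil})$; the odd/even case split in $\gamma_k$ comes from the ceiling in $U=\Theta(m^{\lceil k/2\rceil})$ behaving differently for even vs.\ odd $k$. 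Finally, Theorem~\ref{thm:VVarbitraryn} follows by taking $k$ large: $1/\lceil(3k-4)/2\rceil\to 0$ so $n=O(N^\eps)$ for any $\eps>0$ once $k$ is large enough, while $1/3+\gamma_k>1/3$, and Observation~\ref{obs:padn} lets us hit the compressed size $\Theta(N^\eps)$ exactly.

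The main obstacle I expect is the combined bookkeeping of the self-reduction: getting a $k$SUM self-reduction whose sub-instance sets are structured enough that the $L$-fold concatenation still compresses to $\tilde O(s)$ (naively, $L$ distinct sets would cost $\tilde\Omega(L\cdot s)$, which is far too much), and then solving the resulting two-variable optimization over $(s,k)$ to land precisely on the stated exponents. A secondary subtlety is making sure the universe-reduction's polylog factors and the all-zero padding blocks do not perturb the exponent $1/\lceil(3k-4)/2\rceil$ — these contribute only $N^{o(1)}$ and $O(\log N)$-sized grammar overhead respectively, so they are absorbed in the $\Theta(\cdot)$ and $\tilde\Omega(\cdot)$ notation, but this needs to be checked. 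Everything else — correctness of the inner-product test, the construction time bound, and the conditional lower bound bookkeeping — is routine given Theorem~\ref{thm:vectorInnerProduct} and Proposition~\ref{prop:SLPrepetition}.
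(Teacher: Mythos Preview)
Your proposal overshoots: the self-reduction machinery is unnecessary, and in fact your own computation already establishes the theorem. You write that the basic single-instance reduction gives a lower bound of $\tilde\Omega(N^{\lceil k/2\rceil/(k-2+\lceil k/2\rceil)})$ and that ``the job is to push the exponent up to $1/3+\gamma_k$''. But check the arithmetic: $k-2+\lceil k/2\rceil$ equals $(3k-3)/2$ for odd $k$ and $(3k-4)/2$ for even $k$, which in both cases is exactly $\lceil(3k-4)/2\rceil$; and dividing $\lceil k/2\rceil$ by this gives $(k+1)/(3(k-1))$ for odd $k$ and $k/(3k-4)$ for even $k$, which are precisely $1/3+\gamma_k$. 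So the basic reduction already hits both the target exponent and the target compression ratio $n=\Theta(N^{1/\lceil(3k-4)/2\rceil})$. The paper's proof is just this: apply the $k$SUM-to-inner-product construction (Lemma~\ref{lem:generalk-reduction}) directly to a Strong $k$SUM instance with $U=\Theta(m^{\lceil k/2\rceil})$, pad the dimension by a $\log^{\lceil(3k-4)/2\rceil}U$ factor so that $n=O(m\log U)=O(N^{1/\lceil(3k-4)/2\rceil})$ exactly, verify the two exponent identities above, and invoke Observation~\ref{obs:padn}.

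Beyond being unnecessary, the self-reduction route has a real gap at the step you flag yourself. The standard self-reduction produces $L$ sub-instances whose sets are genuinely distinct (disjoint consecutive blocks of the sorted input), so concatenating their per-instance encodings gives a grammar of size $\tilde\Omega(L\cdot s)$, not $\tilde O(s)$; there is no evident structure to exploit across sub-instances. This is why in the paper the self-reduction is used only for the matrix-vector bound (Theorem~\ref{thm:MV}), where the $L$ encodings land in separate matrix rows and each row is allowed its own $\tilde O(s)$-size grammar.
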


Observe that the above statement implies Theorem~\ref{thm:VVarbitraryn}: For any $0 < \eps < 1/3$, we choose $k$ sufficiently large such that $1/\lceil \frac{3k-4}{2}\rceil < \eps$. Then using Observation~\ref{obs:padn}, we obtain that any $O(N^{1/3-\delta})$-time algorithm for Vector Inner Product with compressed size $n=\Theta(N^\eps)$ would give an $O(N^{1/3+\gamma_k - \delta'})$-time algorithm for Vector Inner Product with compressed size $O(N^{1/\lceil \frac{3k-4}{2}\rceil})=O(N^\eps)$, where $\delta'=\gamma_k+\delta$ -- this would refute the Strong $k$SUM conjecture by Theorem~\ref{thm:vectorInnerProduct-generalk}.

Furthermore, observe that if we set $k=3$, we obtain a $\tilde{\Omega}(N^{2/3})$ lower bound for compressed size $n=\Theta(N^{1/3})$ under the Strong 3SUM conjecture.

In the remainder of this section, we give the proof of Theorem~\ref{thm:vectorInnerProduct-generalk}. The central construction is captured by the following lemma.

\begin{lemma}\label{lem:generalk-reduction}
	Given sets $A_{1},\dots,A_{k}$ of integers in $\{1,\dots,U\}$, we define
	\begin{align*}
		v_{A_{1}+\dots+A_{k-1}}' & \coloneqq\bigcirc_{\substack{(a_{1},\dots,a_{k-2})\in A_{1}\times\cdots\times A_{k-2}\\
				\text{in lexicographic order}
			}
		}0^{a_{1}+\dots+a_{k-2}}v_{A_{k-1}}0^{(k-2)U-a_{1}-\dots-a_{k-2}},\\
		v_{A_{k}}' & \coloneqq(v_{A_{k}}0^{(k-2)U})^{m^{k-2}},
	\end{align*}
	where $v_{A_{k-1}},v_{A_{k}}\in\{0,1\}^{U}$ denote the characteristic
	vectors of the sets $A_{k-1},A_{k}$. We have the following properties:
	\begin{enumerate}
		\item The inner product of the $m^{k-2}(k-1)U$-dimensional vectors $v_{A_{1}+\cdots+A_{k-1}}'$
		and $v_{A_{k}}'$ is nonzero if and only if there is a tuple $(a_{1},\dots,a_{k})\in A_{1}\times\cdots\times A_{k}$
		with $a_{1}+\dots+a_{k-1}=a_{k}$.
		\item We can compute compressions of $v_{A_{1}+\dots+A_{k-1}}',v_{A_{k}}'$
		of size $O(km\log U)=O(m\log U)$ in time $O(m\log U)$.
	\end{enumerate}
\end{lemma}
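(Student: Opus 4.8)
The plan is to prove the two claimed properties in turn. Property~1 is a direct index-chasing argument — essentially the $k=3$ correctness argument of Theorem~\ref{thm:vectorInnerProduct} carried out one index at a time — whereas Property~2 is the technical core and forces us to build the grammar for $v'_{A_1+\dots+A_{k-1}}$ \emph{hierarchically} rather than block by block.

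For Property~1 I would argue as follows. Both vectors are $0/1$, so their inner product is nonzero iff they share a $1$-entry, and both consist of $m^{k-2}$ aligned blocks of length $(k-1)U$, the block for a tuple $(a_1,\dots,a_{k-2})\in A_1\times\cdots\times A_{k-2}$ (in lexicographic order) being $0^{a_1+\dots+a_{k-2}}\,v_{A_{k-1}}\,0^{(k-2)U-a_1-\dots-a_{k-2}}$ in $v'_{A_1+\dots+A_{k-1}}$ and $v_{A_k}\,0^{(k-2)U}$ in $v'_{A_k}$. Inside such a block, $v'_{A_1+\dots+A_{k-1}}$ is $1$ exactly at the offsets $(a_1+\dots+a_{k-2})+a_{k-1}$ with $a_{k-1}\in A_{k-1}$, and $v'_{A_k}$ is $1$ exactly at the offsets lying in $A_k$ (in particular only at offsets $\le U$). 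Hence a common $1$ in this block at offset $o$ is equivalent to $o=a_1+\dots+a_{k-1}$ for some $a_{k-1}\in A_{k-1}$ together with $o\in A_k$; taking $a_k:=o$ yields a solution of the $k$SUM equation, and conversely any solution $(a_1,\dots,a_k)$ pins down the block $(a_1,\dots,a_{k-2})$ and the offset $o=a_k\le U$, which I would just check lies inside the $v_{A_{k-1}}$-window of that block (it does, since $a_1+\dots+a_{k-2}<o\le a_1+\dots+a_{k-2}+U\le(k-1)U$). The trailing padding $0^{(k-2)U-a_1-\dots-a_{k-2}}$ is exactly what keeps all blocks at the uniform length $(k-1)U$, so the two vectors stay aligned.

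For Property~2, the compression of $v'_{A_k}=(v_{A_k}\,0^{(k-2)U})^{m^{k-2}}$ is immediate: compress $v_{A_k}$ to size $O(m\log U)$ by splitting it into $O(m)$ zero-blocks and applying Proposition~\ref{prop:SLPrepetition} to each, append a non-terminal for $0^{(k-2)U}$, and take the $m^{k-2}$-fold repetition using a further $O(\log(m^{k-2}))=O(k\log m)$ rules via Proposition~\ref{prop:SLPrepetition}. The construction of $v'_{A_1+\dots+A_{k-1}}$ is where the work lies, since the naive bound of $O(\log U)$ rules per block would give $O(m^{k-2}\log U)$, far too large. The key observation is a telescoping of adjacent zero-blocks: fix an outer index-prefix with sum $s=a_1+\dots+a_{j-1}$; then the level-$j$ concatenation $\bigcirc_{a_j\in A_j}0^{s+a_j+c}\,M\,0^{d-s-a_j}$ has the feature that, between two consecutive blocks, the suffix zeros of one and the prefix zeros of the next merge into a single zero-block whose length is \emph{independent of $s$} — it depends only on the gap between consecutive elements of $A_j$ and on the fixed offsets $c,d$. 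Thus this concatenation equals $0^{s+c'}\,M'\,0^{d'-s}$, where $M'$ is an $s$-independent string ($m$ copies of $M$ interleaved with $m-1$ short zero-blocks) and $c',d'$ are again fixed offsets obtained from $c,d$ and the minimum/maximum of $A_j$. I would apply this identity from the innermost index $a_{k-2}$ (where the content $M$ is just a single non-terminal generating $v_{A_{k-1}}$) outward to $a_1$: at each level it introduces a non-terminal for the new shared string, reusing the one from the previous level and spending only $O(m\log U)$ fresh rules for the $m-1$ interleaving zero-blocks; at the end $v'_{A_1+\dots+A_{k-1}}=0^{c_1}\,M_1\,0^{d_1}$ for fixed $c_1,d_1$ that are sums of minima and complementary sums of maxima of the $A_i$. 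Summing $O(m\log U)$ over the $k-2$ levels gives compressed size $O(km\log U)$; since nothing is ever expanded, the output-linear running time of Proposition~\ref{prop:SLPrepetition} gives construction time $O(km\log U)$ as well, and both quantities are $O(m\log U)$ for constant $k$.

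The main obstacle is the bookkeeping of all these zero-block lengths: at every level one must verify that each exponent stays non-negative — both the per-block trailing length $d-s-a_j\ge 0$ (for which I would use $s+a_j\le jU$ against $d\ge jU$, the latter holding because $d$ is $(k-2)U$ minus at most $k-2-j$ of the maxima) and the merged interior length $d+c+(a_j^{(i+1)}-a_j^{(i)})>0$. This is precisely where the seemingly wasteful choice of uniform block length $(k-1)U$ earns its keep. Once the telescoping identity is set up with the correct constants, the induction over the $k-2$ levels and the size accounting are routine.
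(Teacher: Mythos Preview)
Your argument is correct and, at the level of ideas, matches the paper: Property~1 is the same block-by-block index chase, the compression of $v'_{A_k}$ is identical, and for $v'_{A_1+\dots+A_{k-1}}$ both you and the paper process the sets $A_{k-2},A_{k-3},\dots,A_1$ one level at a time, reusing a single non-terminal for the previously built string and spending $O(m\log U)$ fresh rules per level, for a total of $O(km\log U)$.

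The difference is in how that per-level step is formalised. The paper asserts the clean recursion $v'_{A_i+\dots+A_{k-1}}=\bigcirc_{j=1}^m 0^{a_j^{(i)}}\,v'_{A_{i+1}+\dots+A_{k-1}}\,0^{U-a_j^{(i)}}$ and compresses the right-hand side. You instead merge adjacent zero-runs at each level, extracting an $s$-independent middle string $M_j$ flanked by $0^{s+c_j}$ and $0^{d_j-s}$, and recurse on $M_j$. Your telescoping is the more careful execution of the same idea: the paper's displayed identity, taken literally, does not hold for $i<k-2$ (the two sides have lengths $m^{k-1-i}(k-i)U$ versus $m^{k-2-i+1}(k-1-i)U+mU$, which disagree once $k\ge 4$), whereas your formulation is exact and actually reproduces the vector from the lemma statement. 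The non-negativity bookkeeping you flag ($d_j\ge jU$ so that $d_j-s-a_j\ge 0$, and $d_j+c_j>0$ so that the merged interior runs are well-defined) is precisely what the block length $(k-1)U$ buys, and it is worth spelling out in the write-up as you suggest.
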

\begin{proof}
For 1., observe that by construction, $v'_{A_1+\dots+A_{k_1}}$ and $v'_{A_k}$ consist of $m^{k-2}$ blocks, indexed by $(a_1,\dots,a_{k-2})\in A_1\times \cdots \times A_{k-2}$ and consisting of the sequence $0^{a_1+\dots+a_{k-2}} v_{A_{k-1}} 0^{(k-2)U-a_1 -\cdots -a_{k-2}}$ and $v_{A_k} 0^{(k-2)U}$ of length $(k-1)U$, respectively. In particular, in block $(a_1,\dots,a_{k-2})$ there is a common $1$-entry $t$ if and only if $t = (a_1+a_2+\cdots+a_{k-2})+a$ for some $a\in A_{k-1}$ and $t=a'$ for some $a'\in A_{k}$. Thus, there exists a common $1$-entry in $v'_{A_1+\cdots+A_{k-2}}$ and $v'_{A_k}$ if and only if there are $(a_1,\dots,a_k)\in A_1\times \cdots \times A_{k}$ with $a_1+\dots+a_{k-1} = a_k$. 

For 2., we first recall that as shown in the proof of Theorem~\ref{thm:vectorInnerProduct}, we can compute a compression of the characteristic vectors $v_{A_{k-1}}$ and $v_{A_k}$ of size $O(m\log U)$ in time $O(m\log U)$. Thus, using Proposition~\ref{prop:SLPrepetition}, we can compute a compression of $v'_{A_{k}}= (v_{A_k}0^{(k-2)U})^{m^{k-2}}$ of size $O(m\log U) + O(\log((k-2)U)) + O(\log m^{k-2}) = O(m\log U)$ in time $O(m \log U)$. To show the claim for $v'_{A_1+\cdots+A_{k-1}}$, we proceed inductively and construct the strings $v'_{A_{k-1}} \coloneqq v_{A_{k-1}}$ and
\[ 	v_{A_{i}+\dots+A_{k-1}}' \coloneqq\bigcirc_{\substack{(a_{i},\dots,a_{k-2})\in A_{i}\times\cdots\times A_{k-2}\\
				\text{in lexicographic order}
			}
		}0^{a_{i}+\dots+a_{k-2}}v_{A_{k-1}}0^{(k-1-i)U-a_{i}-\dots-a_{k-2}},
\]
for $i=k-2,\dots,1$. The central observation is that we can write $A_i = \{a_1^{(i)},\dots,a_m^{(i)}\}$ with $a_1^{(i)} < a_2^{(i)} < \cdots < a_m^{(i)}$ and obtain
\[ v'_{A_i+\cdots+A_{k-1}} = \bigcirc_{j=1}^m 0^{a_j^{(i)}} \, v'_{A_{i+1}+\cdots+A_{k-1}} 0^{U-a_j^{(i)}}.\]
Thus, given an SLP $\mathcal{G}_{i+1}$ for $v'_{A_{i+1}+\cdots+A_{k-1}}$ with starting symbol $S_{i+1}$, we can give an SLP $\mathcal{G}_i$ for $v'_{A_{i}+\cdots+A_{k-1}}$ of size $|\mathcal{G}_{i+1}|+O(m\log U)$ as follows: For each $j=1,\dots,m$, we encode $0^{a^{(i)}_j}$ using $O(\log a^{(i)}_j)=O(\log U)$ additional symbols, re-use $S_{i+1}$ to generate $v'_{A_{i+1}+\cdots+A_{k-1}}$, and encode $0^{U-a^{(i)}_j}$ using $O(\log(U-a^{(i)}_j))=O(\log U)$ additional symbols. Observe that we can obtain this compression in time $O(m\log U)$.

Thus, starting from an SLP for $v'_{A_{k-1}}$, after $k-2$ steps we obtain an SLP $\mathcal{G}_1$ for $v_{A_{1}+\dots+A_{k-1}}'$ of size $O(km\log U)=O(m\log U)$. The running time of this construction is $O(km\log U)=O(m\log U)$, concluding the proof.
\end{proof}

Let $A_1,\dots,A_k \subseteq \{1,\dots,U\}$ be a Strong $k$SUM instance, i.e., $U=O(m^{\lceil k/2\rceil})$. The reduction given in Lemma~\ref{lem:generalk-reduction} gives two vectors $v,v'$ of dimension $m^{k-2}\cdot (k-1)U$ such that their inner product allows us to decide the $k$SUM instance. Furthermore, the vectors have a compressed size of $O(m\log U)$. 

We slightly adapt $v,v'$ by appending $0$'s to increase the dimension slightly to $N = m^{k-2}\cdot (k-1)U\log^{\lceil (3k-4)/2\rceil} U$ (this does not change their inner product). We verify the following facts: (1) an $O(N^{1/3+\gamma_k-\delta})$-time Vector Inner Product algorithm for some $\delta > 0$ refutes the Strong $k$SUM conjecture and (2) $n = O(N^{1/\lceil \frac{3k-4}{2}\rceil})$. Using Observation~\ref{obs:padn}, this concludes the proof of Theorem~\ref{thm:vectorInnerProduct-generalk}.

For (1), consider first the case that $k$ is odd. Then $U=O(m^{(k+1)/2})$ and $N=O(m^{k-2}U\mathrm{polylog} U)=O(m^{3(k-1)/2}\mathrm{polylog} m)$. Observe that 
\begin{align*}
N^{1/3+\gamma_k-\delta} & = O(m^{\frac{3(k-1)}{2}\cdot (\frac{1}{3}+\frac{2}{3(k-1)}-\delta)}\mathrm{polylog} m) \\
& = O(m^{\frac{k-1}{2}+1- \frac{3(k-1)}{2}\delta}) = O(m^{\lceil \frac{k}{2} \rceil-\delta'}),
\end{align*}
    for any  $0 < \delta'< 3(k-1)\delta/2$.
   
    Similarly, for even $k$, we have $U=O(m^{k/2})$ and $N=O(m^{k-2}U\mathrm{polylog} U)=O(m^{(3k-4)/2}\mathrm{polylog} m)$. Using $1/3+\gamma_k = 1/3+4/(9k-12)= k/(3k-4)$, we obtain that
\[ N^{1/3+\gamma_k-\delta} = O(m^{\frac{3k-4}{2}\cdot (\frac{k}{3k-4}-\delta)}\mathrm{polylog} m) = O(m^{\frac{k}{2}- \delta'}),\]
for any $0< \delta'<(3k-4)\delta/2$. Thus, in both cases, an $O(N^{1/3+\gamma_k-\delta})$-time Vector Inner Product algorithm refutes the Strong $k$SUM conjecture by solving the given $k$SUM instance in time $O(m^{\lceil k/2 \rceil -\delta'})$ with $\delta'>0$. 

Finally, for (2), note that $N=O(m^{k-2}U\log^{\lceil (3k-4)/2 \rceil} U)=O(m^{\lceil (3k-4)/2\rceil}\log^{\lceil (3k-4)/2\rceil}m)$. Thus $n=O(m\log m)=O(N^{1/\lceil (3k-4)/2\rceil})$, as desired.

\section{Matrix-Vector Product}

In this section we provide the full proof of Theorem~\ref{thm:MV}.
We first prove a self-reduction for 3SUM as a central tool (using standard techniques), and then proceed to give the final reduction.

\subsection{Proof of the Self-Reduction}

Let us restate Lemma~\ref{lem:selfreduction}.

\begin{lemma}[Self-Reduction for 3SUM]\label{lem:selfreduction-appendix}
Let $1\le s=s(m) \le m$ and $\eps> 0$ be arbitrary. If there is an algorithm that, given a target $t$ and $L=O( (m/s)^2 )$ sets $A_\ell,B_\ell,C_\ell$ of $s$ integers in $\{1,\dots,O(s^3\log^2 s)\}$, determines for all $1\le \ell \le L$ whether there are $a\in A_\ell, b\in B_\ell, c\in C_\ell$ with $a+b+c=t$ in total time $O(m^{2-\epsilon})$, then the 3SUM conjecture is false. 
\end{lemma}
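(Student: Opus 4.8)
The plan is to compose two classical ingredients: the self-reduction of Lincoln et al.~\cite{LincolnWWW16} and the prime-modulus universe-shrinking trick~\cite{BDP05,Pat10,abboud2014losing}. We start from a 3SUM instance on $m$ integers, which by the folklore universe reduction (as in Section~\ref{sec:prelim}) we may assume lies in $\{1,\dots,O(m^3\log^2 m)\}$ and which, after negating the third set and a harmless shift, we phrase as: is there $a\in A, b\in B, c\in C$ with $a+b+c=0$? Hash all integers into $R=\Theta(m/s)$ buckets using an almost-linear, almost-balanced hash $h$, and split $A,B,C$ into the induced buckets $A_i,B_j,C_k$. Almost-linearity forces any solution into a bucket triple $(i,j,k)$ with $i+j+k$ lying in one of $O(1)$ residue classes mod $R$, so only $O(R^2)=O((m/s)^2)$ bucket triples are relevant; after balancing (splitting overfull buckets, padding underfull ones, incurring only polylogarithmic overhead) each relevant triple yields a sub-instance on exactly $s$ integers, all with the common target $0$, and the original instance is a YES-instance iff some sub-instance is.

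The sub-instances still live in the universe $\{1,\dots,O(m^3\log^2 m)\}$, so I would shrink it next. Draw $\tau$ independent random primes $p_1,\dots,p_\tau$ of magnitude $\Theta(s^3\log^2 s)$ and, for each sub-instance and each $r$, reduce every integer modulo $p_r$; declare a bucket triple \emph{relevant} only if the reduced sub-instance has a solution for \emph{all} $r$. A genuine solution survives every reduction, so relevance is never lost. A spurious triple $(a,b,c)$ has $a+b+c\neq 0$ with $|a+b+c|=O(m^3\log^2 m)$, hence at most $O(\log m)$ prime divisors among the $\Theta(p_r/\log p_r)$ primes sampled, so it survives a fixed $p_r$ with probability $O(\log m/(s^3\log s))$ and all $\tau$ of them with probability at most $(O(\log m/(s^3\log s)))^{\tau}$; since there are $O((m/s)^2 s^3)=O(m^2 s)$ triples in total, taking $\tau=O(\log m)$ (a constant suffices when $s=m^{\Omega(1)}$) makes the expected number of surviving spurious triples at most $1/3$. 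Finally, reducing modulo $p_r$ turns ``$a+b+c=0$'' into ``$a+b+c\in\{0,p_r,2p_r\}$''; I fold these $O(1)$ cases and the $\tau$ primes into a single common target $t:=3\max_r p_r$ by translating, in case $q\in\{0,1,2\}$ for prime $p_r$, one of the three sets of the sub-instance by $t-qp_r$, which keeps all integers and the target within $\{1,\dots,O(s^3\log^2 s)\}$. The outcome is $O((m/s)^2)$ sub-instances, organized into $O(\tau)$ batches indexed by $(r,q)$ each with a single common target of the stated form, on $s$ integers in $\{1,\dots,O(s^3\log^2 s)\}$, such that (whp) the original 3SUM instance is a YES-instance iff some bucket triple is relevant.

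To conclude, feed each batch to the hypothesized $O(m^{2-\epsilon})$-time algorithm (total time $\tilde{O}(m^{2-\epsilon})$), then for each bucket triple AND the algorithm's answers over the $\tau$ primes and OR over the $O(1)$ wraparound cases, and output YES iff some bucket triple comes out relevant. This errs only by turning a NO-instance into a YES-instance, and only when some spurious triple survives, so it is correct with probability $\ge 2/3$, which standard repetition amplifies; the total running time is $\tilde{O}(m^{2-\epsilon})=O(m^{2-\epsilon/2})$, refuting the 3SUM conjecture, which is assumed to hold even against randomized algorithms. The delicate point --- and the part I would be most careful about in a full writeup --- is the interplay between the prime size and the repetition count $\tau$: a single prime of the mandated magnitude $\Theta(s^3\log^2 s)$ is in general too small to eliminate all $\Theta(m^2 s)$ spurious triples on its own (that would require a prime of size $\Omega(m^2 s)$), so one genuinely needs both the amplification over several independent primes and the verification that the translation trick merges every resulting sub-instance onto one common target without inflating the universe past $O(s^3\log^2 s)$.
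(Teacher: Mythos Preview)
Your proposal is correct and follows the same two-phase template as the paper: a self-reduction to $O((m/s)^2)$ size-$s$ sub-instances sharing a common target, followed by universe shrinking via random primes with $O(\log m)$-fold amplification and the three wraparound targets $\{0,p_r,2p_r\}$. The one substantive difference is in the first phase: the paper invokes the \emph{deterministic} sorting-based self-reduction of Lincoln et al.~\cite{LincolnWWW16} (sort each set, split into $m/s$ consecutive blocks, and use a domination argument to show that only $O((m/s)^2)$ block triples are non-trivial), whereas you use the randomized almost-linear-hashing reduction of~\cite{Pat10,BDP05}. Both are standard and give identical parameters; the paper's choice sidesteps the bucket-balancing you allude to, while yours is arguably more familiar. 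One point worth making explicit: the hypothesis itself forces $s=\Omega(m^\eps)$ (otherwise the input size $\Theta(m^2/s)$ already exceeds the assumed running time), and the paper uses exactly this observation to conclude $\log m = O(\log s)$ and hence that primes of magnitude $\Theta(s^3\log^2 s)$ suffice --- this is precisely what resolves the ``delicate point'' you flag at the end. Finally, your merging of all $(r,q)$ batches onto a single target $t=3\max_r p_r$ is correct but unnecessary; the lemma permits a fresh target per call, and the paper simply takes $t_\lambda = 3 + \lambda p_r$ for batch $(r,\lambda)$.
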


In the remainder of this section, we give the proof.

Let $A,B,C$ be sets of $m$ integers in $\{1,\dots,U\}$. We use
a couple of results from earlier work that are stated for the following
3SUM formulation: given three sets $A',B',C'$ of $m$ integers in
$\{-U,\dots,U\}$ with $U=O(m^{3}\log^2 m)$, we are asked to determine
whether there are $a\in A',b\in B',c\in C'$ such that $a+b+c=0$.
We first reduce our formulation to this formulation by setting $A'\coloneqq A,B'\coloneqq B,$
and $C'\coloneqq-C=\{-c\mid c\in C\}$. We can now use the following
known self-reduction for 3SUM.
\begin{lemma}[Reformulated from {\cite[Theorem 13]{LincolnWWW16}}]\label{lem:selfreduction-core}
Let $s\coloneqq s(m)$ with $1\le s\le m$.
Given three sets $A',B',C'$ of $m$ integers in $\{-U,\dots,U\}$,
we can compute, in time $O(m^{2}/s)$, a list of $L=O((m/s)^{2})$
3SUM instances, i.e., sets $A'_{\ell},B_{\ell}',C_{\ell}'$ with $1\le\ell\le L$,
such that there is an $a\in A',b\in B',c\in C'$ with $a+b+c=0$ if
and only if there is an instance $1\le\ell\le L$ and a triple $a\in A'_{\ell},b\in B_{\ell}',c\in C_{\ell}'$
with $a+b+c=0$. Furthermore, each $A'_{\ell},B'_{\ell},C'_{\ell}$
is a subset of $s$ integers of $A',B',C'$, respectively.
\end{lemma}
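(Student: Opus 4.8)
The plan is to give a self-contained proof via the classical block-partitioning self-reduction for 3SUM; this is essentially \cite[Theorem~13]{LincolnWWW16} restated in our notation, and the content is a construction, a one-line correctness check, and a counting lemma bounding the number of produced instances — the latter being the real obstacle.

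\emph{Construction.} Sort $A',B',C'$ (in $O(m)$ time by radix sort, since all entries are polynomially bounded), assume for simplicity that $s$ divides $m$ (otherwise pad each set by $O(s)$ harmless dummy elements outside the relevant range), and cut each sorted set into $m/s$ consecutive blocks of size exactly $s$: $A'=A'_1\uplus\dots\uplus A'_{m/s}$ with $\max A'_i<\min A'_{i+1}$, and likewise $B'_1,\dots,B'_{m/s}$ and $C'_1,\dots,C'_{m/s}$. Writing $\ell(X)=\min X$, $r(X)=\max X$, note that any solution $a\in A'_i$, $b\in B'_j$, $c\in C'_k$ of $a+b+c=0$ must satisfy $-c=a+b\in I_{i,j}:=[\ell(A'_i)+\ell(B'_j),\,r(A'_i)+r(B'_j)]$, hence the block $C'_k$ must intersect $-I_{i,j}$. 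For each pair $(i,j)\in[m/s]^2$ we therefore compute the (contiguous) set of indices $K_{i,j}=\{k:[\ell(C'_k),r(C'_k)]\cap(-I_{i,j})\neq\emptyset\}$ — since $-I_{i,j}$ moves monotonically as $j$ grows, for fixed $i$ this is done by sweeping $j$ with two pointers into the blocks of $C'$ — and output the instance $(A'_i,B'_j,C'_k)$ for every $k\in K_{i,j}$. Each output instance is a triple of subsets of $A',B',C'$ of size at most $s$ (pad with dummies to exactly $s$ if desired), and correctness is immediate: a solution of the original instance lands in the instance indexed by the three blocks containing it, and conversely any solution of an output instance is one of the original instance.

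\emph{The main obstacle: bounding $L=\sum_{i,j}|K_{i,j}|$ by $O((m/s)^2)$.} A naive width count fails because both the target intervals $-I_{i,j}$ and the $C'$-blocks can be extremely non-uniform in width. The key structural fact is that the \emph{pairwise disjoint} ranges $[\ell(A'_i),r(A'_i)]$ contain any fixed value at most once, which lets a double-count telescope. Concretely, $|K_{i,j}|\le 2+\#\{C'\text{-blocks fully inside }-I_{i,j}\}\le 2+\tfrac1s|C'\cap(-I_{i,j})|$, since a $C'$-block inside $-I_{i,j}$ contributes its $s$ distinct points to $C'\cap(-I_{i,j})$; hence $L\le 2(m/s)^2+\tfrac1s\sum_{c\in C'}\#\{(i,j):-c\in I_{i,j}\}$. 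For fixed $c$ and $i$, the set $\{j:-c\in I_{i,j}\}$ is exactly the set of $B'$-blocks meeting the width-$w^A_i$ interval $[-c-r(A'_i),\,-c-\ell(A'_i)]$, so its size is $\le 2+\tfrac1s|B'\cap[-c-r(A'_i),-c-\ell(A'_i)]|$; summing over $i$ and swapping the order of summation gives $\sum_i|B'\cap[-c-r(A'_i),-c-\ell(A'_i)]|=\sum_{b\in B'}\#\{i:-c-b\in[\ell(A'_i),r(A'_i)]\}\le m$ by disjointness of the $A'$-block ranges. Therefore $\#\{(i,j):-c\in I_{i,j}\}=O(m/s)$, so $L\le 2(m/s)^2+\tfrac1s\cdot O(m^2/s)=O((m/s)^2)$.

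\emph{Running time.} Radix sorting costs $O(m)$; for each $i$ the two-pointer sweep over $j$ costs $O(m/s)$ for pointer movement plus $O(\sum_j|K_{i,j}|)$ for reporting, summing over $i$ to $O((m/s)^2+L)=O((m/s)^2)$; finally, writing out the $L=O((m/s)^2)$ instances of size $s$ costs $O(Ls)=O(m^2/s)$. The total is $O(m^2/s)$, as claimed. This core reduction is exactly what the self-reduction of Lemma~\ref{lem:selfreduction} builds on, after applying the standard modular-hashing universe-size reduction (e.g., \cite[Lemma~B.1]{abboud2014losing}) to each of the $L$ produced instances to bring their universe down to $O(s^3\log^2 s)$.
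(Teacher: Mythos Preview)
Your proposal is correct and follows the same construction as the paper: sort, partition each set into $m/s$ contiguous blocks, and output exactly the triples $(i,j,k)$ for which the block-sum interval $I_{i,j}$ meets $-[\ell(C'_k),r(C'_k)]$ --- this is precisely the paper's set of ``non-trivial'' subproblems (those with $\min A'_i+\min B'_j+\min C'_k\le 0\le\max A'_i+\max B'_j+\max C'_k$). Correctness and the running-time accounting match.

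Where you diverge is in the \emph{counting} step bounding $L=O((m/s)^2)$. The paper (following \cite{LincolnWWW16}) invokes a structural fact: the non-trivial triples are sandwiched between two monotone staircases in $\{1,\dots,m/s\}^3$, and the domination partial order on $\{1,\dots,u\}^3$ has antichains of size only $O(u^2)$. You instead give a direct, self-contained element-counting argument: bound $|K_{i,j}|$ by $2$ plus $\tfrac1s$ times the number of $C'$-points in $-I_{i,j}$, swap sums, and telescope twice using the pairwise disjointness of the $A'$- and $B'$-block ranges. Your route is more elementary (no appeal to poset antichain bounds) and makes the $O((m/s)^2)$ constant completely explicit; the paper's route is shorter to state and connects to a reusable combinatorial fact. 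Both are fine proofs of the same lemma.
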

\begin{proof}[Proof sketch]
 We give the high-level arguments (for details, see the proof of Theorem 13 in~\cite{LincolnWWW16}). For a set
$S$, let $\min S$ and $\max S$ denote the smallest and largest
element in $S$, respectively. We sort $A',B',C'$ and split each
array into $\lceil m/s\rceil$ consecutive parts $A'_{1},\dots,A'_{\lceil m/s\rceil},B'_{1},\dots,B_{\lceil m/s\rceil}',C_{1}',\dots,C_{\lceil m/s\rceil}'$,
each of at most $s$ elements, such that $\max A_{i}'<\min A_{i+1}'$,$\max B_{i}'<\min B_{i+1}'$
and $\max C_{i}'<\min C_{i+1}'$ for all~$i$. Instead of searching
for a 3SUM triple $a\in A_{i}',b\in B_{j}',c\in C_{k}'$ for each
$1\le i,j,k\le\lceil m/s\rceil$ (i.e., $\Theta((m/s)^{3})$ subproblems
with $s$ elements each), one observes that most subproblems can be
trivially solved: We say that a subproblem $(i,j,k)$ is trivial,
if $\min A_{i}+\min B_{j}+\min C_{k}>0$ or $\max A_{i}+\max B_{j}+\max C_{k}<0$;
these subproblems cannot contain a solution. The key insight is that
there are at most $O((m/s)^{2})$ non-trivial subproblems (which follows
since the \emph{domination} partial ordering on $\{1,\dots,u\}^{3}$
has at most $O(u^{2})$ incomparable elements); these can be determined
in time $O((m/s)^{2})$. Thus, it suffices to list all $O((m/s)^{2})$
non-trivial subproblems with $s$ integers in each set in time $O(m^{2}/s)$.
\end{proof}
The resulting instances $A_{\ell}',B_{\ell}',C_{\ell}'$ consist of
integers in $\{-U,\dots,U\}$ with large universe size $U=O(m^{3}\log^{2}m)$.
We reduce the universe size to $O(s^{3}\log^{2}s)$ using a folklore
technique (a slightly stronger result with $U=O(s^{3})$ can be achieved
using the techniques of \cite{BDP05}). To prepare notation, for any
set $S$, we let $S\bmod p\coloneqq\{s\bmod p\mid s\in S\}$.

\begin{lemma}[Adaptation of {\cite[Lemma B.1]{abboud2014losing}}]\label{lem:universe-reduction}
 There is some $\alpha$ such that $U'\coloneqq\alpha s^{3}\log s\log U$
satisfies the following property: Let $A,B,C$ be sets of $s$ integers
in $\{-U,\dots,U\}$ such that no $a\in A,b\in B,c\in C$ satisfies
$a+b+c=0$. Let $p$ be a prime chosen uniformly at random from $\{2,\dots,U'\}$.
Then the probability that there are $a_{p}\in A\bmod p,b_{p}\in B\bmod p,c_{p}\in C\bmod p$
with $a_{p}+b_{p}+c_{p}\equiv0\pmod p$ is at most $1/2$.
\end{lemma}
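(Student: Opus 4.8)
The plan is to follow the standard folklore argument for reducing the universe size of 3SUM via a random prime modulus, as in \cite[Lemma B.1]{abboud2014losing}, making the dependence on $s$ and $U$ explicit. First I would observe that, writing $a_p \coloneqq a \bmod p \in \{0,\dots,p-1\}$ so that $a_p \equiv a \pmod p$ (and likewise for $b_p,c_p$), we have $a_p + b_p + c_p \equiv a+b+c \pmod p$; hence there exist $a_p \in A\bmod p$, $b_p \in B\bmod p$, $c_p \in C\bmod p$ with $a_p+b_p+c_p \equiv 0 \pmod p$ precisely when $p$ divides $a+b+c$ for some triple $(a,b,c)\in A\times B\times C$. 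Call such a prime $p$ \emph{bad}. Since by hypothesis no triple sums to $0$, every relevant sum $a+b+c$ is a nonzero integer of absolute value at most $3U$, so it has at most $\log_2(3U)$ distinct prime divisors; as there are at most $s^3$ triples, the total number of bad primes is at most $s^3\log_2(3U)$.

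Next I would lower bound the number of candidate moduli: by Chebyshev's estimate there is an absolute constant $c_0>0$ with $\pi(x)\ge c_0 x/\ln x$ for all $x\ge 2$, so $\{2,\dots,U'\}$ contains at least $c_0 U'/\ln U'$ primes. It then remains to choose the constant $\alpha$ large enough that $c_0 U'/\ln U' \ge 2 s^3\log_2(3U)$ for $U' = \alpha s^3\log s\log U$. Since $\ln U' = \ln\alpha + 3\ln s + \ln\ln s + \ln\ln U = O(\log s + \log\log U)$, and in the parameter regime in which the lemma is invoked (the instances produced by Lemma~\ref{lem:selfreduction-core} have $U = O(m^3\log^2 m)$ while $s$ is polynomially related to $m$, so $\log\log U = O(\log s)$) this is $O(\log s)$, we obtain $c_0 U'/\ln U' = \Omega(\alpha s^3\log U)$, which exceeds $2s^3\log_2(3U)$ once $\alpha$ is a sufficiently large constant. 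Therefore a uniformly random prime $p\in\{2,\dots,U'\}$ is bad with probability at most $\frac{s^3\log_2(3U)}{c_0 U'/\ln U'}\le 1/2$, which is exactly the stated conclusion.

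The argument is essentially routine; the only step needing care is this last one, namely matching the single $\log s$ factor of $U'$ against the $\ln U'$ in the denominator of the prime-counting bound. This is where the $\log\log U$ term is harmlessly absorbed — it is dominated by $\log s$ in the regime in which the lemma is applied — and it is the reason the reduction must be instantiated with $s$ at least a sufficiently large constant. No number-theoretic input beyond the elementary bound $\pi(x)=\Omega(x/\log x)$ is required, and since the only randomness is the choice of $p$, the bound holds verbatim for the randomized reduction.
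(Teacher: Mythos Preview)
Your argument is correct and follows essentially the same route as the paper: bound the number of prime divisors of each nonzero sum $a+b+c$ by $\log_2(3U)$, lower-bound the number of primes in $\{2,\dots,U'\}$ via the prime number theorem (or Chebyshev), and conclude by a union bound over the $s^3$ triples. If anything, you are more explicit than the paper about the one delicate point, namely that the $\ln U'$ in the denominator of $\pi(U')$ is absorbed by the single $\log s$ factor in $U'$ only because $\log\log U = O(\log s)$ in the regime where the lemma is applied; the paper's proof simply asserts that $\alpha$ can be chosen large enough without spelling this out.
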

\begin{proof}
Let $a\in A,b\in B,c\in C$ be arbitrary. Since $a+b+c\ne0$, note
that $(a\bmod p)+(b\bmod p)+(c\bmod p)\equiv0\pmod p$ if and only
if $p$ divides $a+b+c$. Since $a+b+c\in\{-3U,\dots,3U\}$, $a+b+c$
has at most $\log_{2}(3U)$ prime factors. Let $P$ denote the number of
prime numbers in $\{2,\dots,U'\}$; by the prime number theorem we can
choose $\alpha$ large enough such that $P\ge2s^{3}\log_{2}(3U)$. Thus,
the probability that $p$ was chosen among these at most $\log_{2}(3U)$ prime
factors is at most $\log_{2}(3U)/P\le1/(2s^{3})$. Thus, by a union
bound over all $s^{3}$ triples $a\in A,b\in B,c\in C$, the probability
that there are $a_{p}\in A\bmod p,b_{p}\in B\bmod p,c_{p}\in C\bmod p$
with $a+b+c\equiv0\pmod p$ is at most $1/2$.
\end{proof}
Note that if $A,B,C$ contain a triple $a,b,c$ with $a+b+c=0$, then
also $A\bmod p,B\bmod p,C\bmod p$ contain a triple $a_{p},b_{p},c_{p}$
with $a_{p}+b_{p}+c_{p}\equiv0\pmod p$ for any $p$.

We can finally prove Lemma~\ref{lem:selfreduction-appendix}: Assume that there is an
algorithm $\mathcal{A}$ that given a target $t$ and $L=O((m/s)^{2})$
instances $A_{\ell},B_{\ell},C_{\ell},1\le\ell\le L$ of $s$ integers
in $\{1,\dots,U'\}$, determines for all $1\le\ell\le L$ whether
there are $a\in A_{\ell},b\in B_{\ell},c\in C_{\ell}$ with $a+b+c=t$
in total time $O(m^{2-\eps})$ with $\eps>0$. Observe that since $\mathcal{A}$ runs in time $O(m^{2-\eps})$, we must have $s=\Omega(m^\eps)$, since otherwise already the size of the input to $\mathcal{A}$ of $\Theta(m^2/s)$ would be $\omega(m^{2-\eps})$. Thus, we have $U'=O(s^3\log^2 s)$.

For $r=1,\dots,\gamma\log m$
many repetitions, we do the following: We choose a random prime $p_{r}\in[2,U']$
and obtain $\ell$ instances in $\{0,\dots,p_{r}-1\}\subseteq\{0,\dots,U\}$
by taking the sets modulo $p_{r}$, i.e., $A_{\ell}^{(r)}\coloneqq A'_{\ell}\bmod p_{r},$
$B_{\ell}^{(r)}\coloneqq B'_{\ell}\bmod p_{r}$, and $C_{\ell}^{(r)}=C'_{\ell}\bmod p_{r}$.
Observe that we may determine whether there is some $a\in A_{\ell}^{(r)},b\in B_{\ell}^{(r)},c\in C_{\ell}^{(r)}$
with $a+b+c\equiv0\pmod {p_{r}}$ by testing for each $t\in\{0,p_{r},2p_{r}\}$,
whether there $a\in A_{\ell}^{(r)},b\in B_{\ell}^{(r)},c\in C_{\ell}^{(r)}$with
$a+b+c=t$. Thus, to do this, and additionally ensure that each integer
is in $\{1,\dots,U'\}$, we add $1$ to each integer in $A_{\ell}^{(r)},B_{\ell}^{(r)},C_{\ell}^{(r)}$
and for each $\lambda\in\{0,1,2\}$, call $\mathcal{A}$ on the sets
$A_{\ell}^{(r)},B_{\ell}^{(r)},C_{\ell}^{(r)},1\le\ell\le L$ with
common target $t_{\lambda}\coloneqq3+\lambda p_{r}$.

Observe that after these $3\gamma\log m$ calls to $\mathcal{A}$,
we know for each $1\le\ell\le L$ and $1\le r\le\gamma\log m$ whether
there are $a\in A_{\ell}',b\in B_{\ell}',c\in C{}_{\ell}'$ with $a+b+c\equiv0\pmod {p_{r}}$.
We declare our original 3SUM instance $A,B,C$ to be a YES instance
if and only if there is some $\ell$ such that for all $r$ we have
found a witness $a\in A_{\ell}',b\in B_{\ell}',c\in C{}_{\ell}'$
with $a+b+c\equiv0\pmod {p_{r}}$. Note that if $A,B,C$ is a YES instance,
we always return YES by Lemma~\ref{lem:selfreduction-core}. Otherwise, if $A,B,C$
is a NO instance, consider a fixed $\ell.$ By Lemmas~\ref{lem:selfreduction-core} and~\ref{lem:universe-reduction}, the probability
that for all $r$, we find $a\in A_{\ell}',b\in B_{\ell}',c\in C_{\ell}'$
with $a+b+c\equiv0\pmod {p_{r}}$ is bounded by $2^{-\gamma\log m}=m^{-\gamma}$.
Thus, by a union bound over all $\ell$, the probability that we incorrectly
return YES in this case is at most $Lm^{-\gamma}=O((m/s)^{2}m^{-\gamma})=O(m^{2-\gamma})$.
We can make this error probability polynomially small by choosing
$\gamma>2$. 

Observe that the running time of the above process is $O(\log m)$ times the running time of $\mathcal{A}$ (note that the running time used for Lemma~\ref{lem:selfreduction-core} is linear in its output size, which is the input size of $\mathcal{A}$ and thus dominated by the running time of $\mathcal{A}$). Thus, we can solve any 3SUM instance in time $O(m^{2-\eps}\log m)$, which would refute the 3SUM conjecture. This concludes the proof of Lemma~\ref{lem:selfreduction-appendix}.


\subsection{Main Reduction for Matrix-Vector Multiplication}

We now turn to the proof of Theorem~\ref{thm:MV}.

\begin{proof}
Let $s$ be a parameter to be chosen later. By Lemma~\ref{lem:selfreduction}, it suffices
to solve $L=O((m/s)^{2})$ 3SUM instances $A_{\ell},B_{\ell},C_{\ell}$
consisting of $s$ integers in $\{1,\dots,U\},U=O(s^3\log^2 s)$ with common
target $1\le t\le3U$ in time $O(m^{2-\epsilon})$ for some $\epsilon > 0$ to contradict the
3SUM conjecture.

We construct an $(L\times3s^{2}U)$ matrix $M$ and $v\in\{0,1\}^{3s^{2}U}$
as follows. Intuitively, each row $M_{\ell}$ and the vector $v$
are partitioned into $s^{2}$ blocks of size $3U$. Each block is
indexed by $(i,j)$ with $i,j\in\{1,\dots,s\}$ in lexicographic order
and the block of $M_{\ell}$ corresponding to $(i,j)$ encodes the
characteristic vector of the set $a_{i}+b_{j}+C_{\ell}=\{a_{i}+b_{j}+c\mid c\in C_{\ell}\}\subseteq\{1,\dots,3U\},$
where $a_{i}$ is the $i$-th integer in $A_{\ell}$ and $b_{j}$
is the $j$-th integer in $B_{\ell}$. Correspondingly, every block
$(i,j)$ in $v$ encodes the characteristic vector of the singleton
set $\{t\}\subseteq\{1,\dots,3U\}$. Thus, there is a position in block $(i,j)$
in which both $M_{\ell}$ and $v$ have a $1$ if and only if there is
a $c\in C_{\ell}$ such that $a_{i}+b_{j}+c=t$.

Formally, for any $1\le\ell\le L$, we write $A_{\ell}=\{a_{1}^{\ell},\dots,a_{s}^{\ell}\},B_{\ell}=\{b_{1}^{\ell},\dots,b_{s}^{\ell}\}$
and define

\[
\begin{array}{ccccccc}
M_{\ell} & \coloneqq & \underbrace{0^{a_{1}+b_{1}}v_{C_{\ell}}0^{3U-a_{1}-a_{2}}}_{v_{a_{1}^{\ell}+b_{1}^{\ell}+C_{\ell}}} & \dots & \underbrace{0^{a_{i}+b_{j}}v_{C_{\ell}}0^{3U-a_{i}-b_{j}}}_{v_{a_{i}^{\ell}+b_{j}^{\ell}+C_{\ell}}} & \dots & \underbrace{0^{a_{s}+b_{s}}v_{C_{\ell}}0^{3U-a_{s}-b_{s}}}_{v_{a_{s}^{\ell}+b_{s}^{\ell}+C_{\ell}}},\\
v & \coloneqq & 0^{t-1}10^{3U-t} & \dots & 0^{t-1}10^{3U-t} & \dots & 0^{t-1}10^{3U-t},
\end{array}
\]
where $v_{C_{\ell}}\in\{0,1\}^{U}$ denotes the characteristic vector
of $C_{\ell}$. By this structure, it is clear that $M_{\ell}v\ge1$
if and only if there are $a\in A_{\ell},b\in B_{\ell},c\in C_{\ell}$
with $a+b+c=t$.

We will show that each row $M_{\ell}$ can be compressed to size $\Theta(s\log s)$
(as opposed to its RLE of length $\Theta(s^{3}\log s)$). We thus
will set $N=\lceil 3s^{2}U\log^{3}s \rceil=\Theta(s^{5}\log^{5}s)$, and append
$0^{N-3s^{2}U}$ to each row $M_{\ell}$ and $v$, so that we obtain
an $L\times N$ matrix $M'$ and $N$-dimensional vector $v'$ whose
product $M'v'$ can be used to solve all instances $A_{\ell},B_{\ell},C_{\ell}$
in linear time. Observe that each row has a compression of size $\Theta(N^{1/5})=\Theta(s\log s)$,
as desired. Since $L=O((m/s)^{2})$ and $N\ge s^{5}$, we can set
$s=\Theta(m^{2/7})$ such that $L\le N$ (we can indeed make $L=N$
by introducing zero rows, if necessary). Thus, an $O(Nn^{2-\epsilon})$-time
algorithm for multiplying $M'$ and $v'$ would solve all $L$ 3SUM
instances in time 
\[
O(Nn^{2-\epsilon})=O((m/s)^{2}(s\log s)^{2-\epsilon})=O((m^{2}/s^{\epsilon})\mathrm{polylog}s)=O(m^{2-\frac{2}{7}\epsilon}\mathrm{polylog}m),
\]
which would refute the 3SUM conjecture.

Analogous to the proof of Theorems~\ref{thm:vectorInnerProduct} and~\ref{thm:vectorInnerProduct-generalk}, we can compute a compression of size $\Theta(s \log s)$ in time $O(s\log s)$. Indeed, for each $M_\ell$, this already follows from Lemma~\ref{lem:generalk-reduction} when setting $A_1 \coloneqq A_\ell, A_2 \coloneqq B_\ell, A_3 \coloneqq C_\ell$, which shows how to compress the string $v'_{A_1+A_2+A_3} = M_\ell$ to size $O(s\log U)=O(s\log s)$ in time $O(s\log U)=O(s\log s)$. For $v$, we simply apply Proposition~\ref{prop:SLPrepetition} to the straightforward compression of $0^{t-1}1 0^{3U-t}$ to size $O(\log U)$, which leads to a compression of $v$ of size $O(\log U + \log s) = O(\log s)$. Using Observation~\ref{obs:padn}, we can make all encodings have size $\Theta(s\log s)$, which concludes the proof.
\end{proof}

\section{Matrix-Matrix Product}

In this section, we give the full proof of Theorem~\ref{thm:MMoutputLB}.

\begin{proof}[Proof of Theorem~\ref{thm:MMoutputLB}]
Let $\ell\in\mathbb{N}$.
 We first define the matrices $A',B'$ where $A'$ is a $(2^{\ell}\times2\ell)$
matrix with rows indexed by strings $x\in\{0,1\}^{\ell}$ in lexicographic
order, and $B'$ is a $(2\ell\times2^{\ell}(2\ell))$ matrix with
columns indexed by $(y,k)\in\{0,1\}^{\ell}\times\{1,\dots,2\ell\}$
in lexicographic order. For arbitrary $z\in\{0,1\}^{\ell}$, let $\diag(z)$
denote the $\ell\times\ell$ diagonal matrix with $z$ on the diagonal.
We define
\begin{align*}
A'_{x} & \coloneqq(x\mid1^{\ell}), & B'_{(y,1),\dots,(y,2\ell)} & \coloneqq\left(\begin{array}{c|c}
\diag(1^{\ell}) & 0\\
\hline 0 & \diag(y)
\end{array}\right).
\end{align*}

Let $C'=A'B'$ be the $(2^{\ell}\times2^{\ell}(2\ell))$ product matrix
of $A'$ and $B'$, with rows and columns indexed by $\{0,1\}^{\ell}$
and $\{0,1\}^{\ell}\times\{1,\dots,2\ell\}$, respectively. Observe
that by definition, $(C_{x,(y,1)},\dots,C_{x,(y,2\ell)})=(x\mid y)$
for any $x,y\in\{0,1\}^{\ell}$. In particular, when we view $C'$
as a $2^{2\ell}(2\ell)$-length string, it contains all strings in
$\{0,1\}^{2\ell}$ as substrings, thus by Lemma~\ref{lem:incompressibility},
any row-wise compression is of size at least $2^{2\ell}/(2\ell)$.

To also ensure column-wise incompressibility, we slightly extend the
construction by analogous transposed constructions: We let $N\coloneqq2^{\ell}(2\ell+1)$
and define the final $(N\times N)$ matrices $A,B$ as follows:
\begin{align*}
A & \coloneqq\left(\begin{array}{c|c|c}
A' & 0 & 0\\
\hline 0 & B'^{T} & 0
\end{array}\right), & B & \coloneqq\left(\begin{array}{c|c}
B' & 0\\
\hline 0 & A'^{T}\\
\hline 0 & 0
\end{array}\right).
\end{align*}
Since $C\coloneqq AB=\left(\begin{array}{c|c}
A'B' & 0\\
\hline 0 & (A'B')^{T}
\end{array}\right)$ contains all length-$(2\ell)$ strings as substrings of the rows
(in the $A'B'$ part) and as substrings of the columns (in the $(A'B')^{T}$
part), any strong compression of $C$ is of size at least $2^{2\ell}/(2\ell)=\Omega(N/\log^{2}N)$,
proving the third part of the claim.

For the first two parts, it remains to show that $A$ and $B$ can
be well compressed: For the convenient compression, we observe that
any row in $A$ is either of the form $(x1^{\ell}\mid0^{2\ell}\mid0^{N-4\ell})$,
which has a RLE of length at most $|x1^{\ell}|+O(\log N)=O(\log N)$,
or it is of the form $(0^{2\ell}\mid0^{i-1}\alpha0^{2\ell-i}\mid0^{N-4\ell})$
for some $\alpha\in\{0,1\},i\in\{1,...,2\ell\}$, which also has a
RLE of length at most $O(\log N)$. Thus, each of the $N$ rows of
$A$ can be compressed to size $O(\log N)$, as desired. By a symmetric
statement, also each column of $B$ has a RLE of size $O(\log N)$.

Finally, for the strong compression, we show that we compress $A^{T}$
when viewed as a string, i.e., we compress the concatenation of the
columns of $A$. The main insight is the following: Imagine a binary
$\ell$-bit counter. Using grammar compression, we can compress the
sequence of values of any fixed bit while the counter counts from
$0$ to $2^{\ell}-1$ in size $O(\ell).$ Formally, let $G_{0},G_{1}$
be grammar compressions of strings $s_{0}$,$s_{1}$. For any $1\le i\le\ell$,
we can encode $(s_{0}^{2^{\ell-i}}s_{1}^{2^{\ell-i}})^{2^{i-1}}$
using only $O(\ell)$ additional non-terminals in the canonical way.
Specifically, using $O(\ell-i)$ new symbols, we may encode $s_{0}^{2^{\ell-i}}s_{1}^{2^{\ell-i}}$;
let $\tilde{S}$ denote the corresponding non-terminal. We then encode
$\tilde{S}^{2^{i-1}}$ using $O(i)$ additional new symbols. In total,
we only need $O((\ell-i)+i)=O(\ell)$ additional symbols, as desired.

We apply the above idea to encode the concatenation all columns of
$A$ as follows: Consider column~$i$.
\begin{itemize}
\item For $1\le i\le\ell$, then by the chosen lexicographic order of the
row indices $x\in\{0,1\}^{\ell}$ of $A'$, note that the $i$-th
column of $A$ is of the form $(0^{2^{\ell-i}}1^{2^{\ell-i}})^{2^{i-1}}\mid0^{N-2^{\ell}}$.
Using the above analysis, we can compress it to size $O(\ell)+O(\log N)=O(\log N)$. 
\item If $\ell+1\le i\le2\ell$, the $i$-th column is of the form $1^{2^{\ell}}\mid0^{N-2^{\ell}}$,
which we can compress to size $O(\log\ell+\log N)=O(\log N)$. 
\item If $2\ell+1\le i\le3\ell$, write $i=2\ell+i'$ and observe that the
$i$-th column of $A$ is of the form $0^{2^{\ell}}\mid(0^{i'-1}10^{\ell-i'})^{2^{\ell}}$.
Using $O(\ell)$ non-terminals to encode $0^{i'-1}10^{\ell-i'}$,
it is immediate that we can compress the complete column using $O(\ell)$
additional non-terminals, i.e., yielding a total of $O(\ell)=O(\log N)$.
\item If $3\ell+1\le i\le4\ell,$ write $i=3\ell+i'$ and observe that by
the chosen lexicographic order of the column indices $(y,k)\in\{0,1\}^{\ell}\times\{1,\dots,2\ell\}$ of
$B'$, the $i$-th column of $A$ is of the form $0^{2^{\ell}}\mid(s_{0}^{2^{\ell-i'}}s_{1}^{2^{\ell-i'}})^{2^{i'-1}}$
where $s_{\alpha}\coloneqq0^{i'-1}\alpha1^{\ell-i'}$. We can give
trivial grammars of size $O(\ell)$ for $s_{0},s_{1}$. Then, by the
above analysis, we only need $O(\ell)$ additional non-terminals for
the counter-like part. In total, we only need $O(\ell)=O(\log N)$
non-terminals to encode the $i$-th column.
\item Finally, observe that the remaining columns $i=4\ell+1,\dots,N$ consist
of $(N-4\ell)N$ zeroes, which we can encode together using only $O(\log N)$
non-terminals.
\end{itemize}

In summary, we can encode the first $4\ell$ columns using $O(\log N)$
non-terminals each, and only $O(\log N)$ non-terminals for the remaining
columns, so we can fully compress the concatenation of $A$'s columns
to size $O(\log^{2}N)$, as claimed.\end{proof}


\end{document}